\newcommand{\deff}{\mbox{$\stackrel{\rm def}{=}$}}
\newcommand{\field}[1]{\mathbb{#1}}
\newcommand{\C}{\field{C}}
\newcommand{\F}{\field{F}}
\newcommand{\cF}{{\cal F}}
\newcommand{\cA}{{\cal A}}
\newcommand{\cB}{{\cal B}}
\newcommand{\cC}{{\cal C}}
\newcommand{\cG}{{\cal G}}
\newcommand{\cP}{{\cal P}}
\newcommand{\cX}{{\cal X}}
\newcommand{\sP}{\cP}
\newcommand{\sG}{\cG}
\newcommand{\Gr}{\smash{{\sG\kern-1.5pt}_q\kern-0.5pt(n,k)}}
\newcommand{\Grtwo}{\smash{{\sG\kern-1.5pt}_2\kern-0.5pt(n,k)}}
\newcommand{\Gkone}{\smash{{\sG\kern-1.5pt}_q\kern-0.5pt(n,k_1)}}
\newcommand{\Gktwo}{\smash{{\sG\kern-1.5pt}_q\kern-0.5pt(n,k_2)}}
\newcommand{\Ps}{\smash{{\sP\kern-2.0pt}_q\kern-0.5pt(n)}}
\newtheorem{theorem}{Theorem}
\newtheorem{lemma}{Lemma}
\newtheorem{remark}{Remark}
\newtheorem{cor}{Corollary}
\newtheorem{example}{Example}
\newtheorem{conjecture}{Conjecture}
\begin{document}

\bibliographystyle{IEEEtran}

\title{Error-Correcting Codes in Projective Spaces \\ via Rank-Metric Codes and Ferrers Diagrams}
\author{Tuvi
Etzion,~\IEEEmembership{Fellow,~IEEE} and Natalia Silberstein
\thanks{T. Etzion is with the Department of Computer Science,
Technion --- Israel Institute of Technology, Haifa 32000, Israel.
(email: etzion@cs.technion.ac.il).}
\thanks{N. Silberstein is with the Department of Computer Science,
Technion --- Israel Institute of Technology, Haifa 32000, Israel.
(email: natalys@cs.technion.ac.il). This work is part of her Ph.D.
thesis performed at the Technion.}
\thanks{This work was supported in part by the Israel
Science Foundation (ISF), Jerusalem, Israel, under Grant 230/08.}
}

\maketitle

\begin{abstract}
Coding in the projective space has received recently a lot of
attention due to its application in network coding. Reduced row
echelon form of the linear subspaces and Ferrers diagram can play
a key role for solving coding problems in the projective space. In
this paper we propose a method to design error-correcting codes in
the projective space. We use a multilevel approach to design our
codes. First, we select a constant weight code. Each codeword
defines a skeleton of a basis for a subspace in reduced row
echelon form. This skeleton contains a Ferrers diagram on which we
design a rank-metric code. Each such rank-metric code is lifted to
a constant dimension code. The union of these codes is our final
constant dimension code. In particular the codes constructed
recently by Koetter and Kschischang are a subset of our codes.
The rank-metric codes used for this construction form a new class
of rank-metric codes. We present a
decoding algorithm to the constructed codes in the projective space.
The efficiency of the decoding depends on the efficiency of the decoding for
the constant weight codes and the rank-metric codes.
Finally, we use puncturing on our final constant dimension codes
to obtain large codes in the projective space which are not
constant dimension.
\end{abstract}

\begin{keywords}
constant dimension codes, constant weight codes, reduced row
echelon form, Ferrers diagram, identifying vector, network coding,
projective space codes, puncturing, rank-metric codes.
\end{keywords}

\section{Introduction}
\label{sec:introduction}

The {\it projective space} of order $n$ over the finite
field~\smash{$\F_q$}, denoted $\Ps$, is the set of all subspaces
of the vector space~\smash{$\F_q^n$}.  Given a nonnegative integer
$k \le n$, the set of all subspaces of \smash{$\F_q^n$} that have
dimension $k$ is known as a {\it Grassmannian}, and usually
denoted~by~$\Gr$. Thus, $ \Ps = \bigcup_{0 \le k \le n} \Gr $. It
turns out that~the natural~measure of distance in $\Ps$ is given
by
$$
d_S (U,\!V) \,\ \deff\ \dim U + \dim V -2 \dim\bigl( U\, {\cap}
V\bigr)
$$
for all $U,V \,{\in}\, \Ps$. It is well known (cf.\cite{AAK,KK})
that the function above is a metric;
thus both $\Ps$ and $\Gr$~can~be regarded as metric spaces. Given
a metric~space, one can define codes. We say that $\C
\kern1pt{\subseteq}\kern1pt \Ps$ is an $(n,M,d)_q$ {\it code in
projective space} if $|\C| = M$ and $d_S (U,\!V) \ge d$ for all
$U,\!V \in \C$. If an $(n,M,d)_q$ code $\C$ is contained in $\Gr$
for some~$k$,~we say that $\C$ is an $(n,M,d,k)_q$ {\it constant
dimension code}. The $(n,M,d)_q$, respectively $(n,M,d,k)_q$,
codes in projective space are akin to the familiar codes in the
Hamming space, respectively (constant-weight) codes in the Johnson
space, where the Hamming distance serves as the metric.

Koetter and Kschischang~\cite{KK} showed that codes in $\Ps$ are
precisely what is needed for error-correction in random network
coding: an $(n,M,d)_q$ code can correct any $t$ packet errors (the
packet can be overwritten), which is equivalent to $t$ insertions
and $t$ deletions of dimensions in the transmitted subspace, and
any $\rho$ packet erasures introduced (adversarially) anywhere in
the network as long as $4t + 2\rho < d$ (see~\cite{SKK} for more
details). This is the motivation to explore error-correcting codes
in $\Ps$~\cite{EV,GaBo08,GaYa1,GaYa2,KoKu08,MGR08,Ska08,XiFu07}.
Koetter and Kschischang~\cite{KK} gave a Singleton like upper
bound on the size of such codes and a Reed-Solomon like code which
asymptotically attains this bound. Silva, Koetter, and
Kschischang~\cite{SKK} showed how these codes can be described in
terms of rank-metric codes~\cite{Gab85,Rot91}. The related
construction is our starting point in this paper. Our goal is to
generalize this construction in the sense that the codes of
Koetter and Kschischang will be sub-codes of our codes and all our
codes can be partitioned into sub-codes, each one of them is a
Koetter and Kschischang like code. In the process we describe some
tools that can be useful to handle other coding problems in $\Ps$.
We also define a new type of rank-metric codes and construct
optimal such codes. Our construction for constant dimension codes
and projective space codes uses a multilevel approach. This
approach requires a few concepts which will be described in the
following sections.

The rest of this paper is organized as follows. In
Section~\ref{sec:form} we define the reduced row echelon form of a
$k$-dimensional subspace and its Ferrers diagram. The reduced row
echelon form is a standard way to describe a linear subspace. The
Ferrers diagram is a standard way to describe a partition of a
given positive integer into positive integers. It appears that the
Ferrers diagrams can be used to partition the subspaces of $\Ps$
into equivalence classes~\cite{Knu71,vLWi92}. In
Section~\ref{sec:partial} we present rank-metric codes which will
be used for our multilevel construction. Our new method requires
rank-metric codes in which some of the entries are forced to be
zeroes due to constraints given by the Ferrers diagram. We first
present an upper bound on the size of such codes. We show how to
construct some rank-metric codes which attain this bound. In
Section~\ref{sec:eccGrass} we describe in details the multilevel
construction of the constant dimension codes. We start by
describing the connection of the rank-metric codes to constant
dimension codes. This connection was observed before
in~\cite{KK,SKK,GaYa1,GaYa2}. We proceed to describe the
multilevel construction. First, we select a binary constant weight
code ${\bf C}$. Each codeword of ${\bf C}$ defines a skeleton of a
basis for a subspace in reduced row echelon form. This skeleton
contains a Ferrers diagram on which we design a rank-metric code.
Each such rank-metric code is lifted to a constant dimension code.
The union of these codes is our final constant dimension code. We
discuss the parameters of these codes and also their decoding
algorithms. In Section~\ref{sec:eccProj} we generalize the
well-known concept of a punctured code for a code in the
projective space. Puncturing in the projective space is more
complicated than its counterpart in the Hamming space. The
punctured codes of our constant dimension codes have larger size
than the codes obtained by using the multilevel approach described
in Section~\ref{sec:eccGrass}. We discuss the parameters of the
punctured code and also its decoding algorithm. Finally, in
Section~\ref{sec:conclude} we summarize our results and present
several problems for further research.

\section{Reduced Echelon Form and Ferrers Diagram}
\label{sec:form}

In this section we give the definitions for two structures which
are useful in describing a subspace in $\Ps$. The reduced row
echelon form is a standard way to describe a linear subspace. The
Ferrers diagram is a standard way to describe a partition of a
given positive integer into positive integers.

A matrix is said to be in {\it row echelon form} if each nonzero
row has more leading zeroes than the previous row.

A $k \times n$ matrix with rank $k$ is in {\it reduced row echelon
form} if the following conditions are satisfied.
\begin{itemize}
\item The leading coefficient of a row is always to the right of
the leading coefficient of the previous row.

\item All leading coefficients are {\it ones}.

\item Every leading coefficient is the only nonzero entry in its
column.
\end{itemize}

A $k$-dimensional subspace $X$ of $\F_q^n$ can be represented by a
$k \times n$ {\it generator matrix} whose rows form a basis for
$X$. We usually represent a codeword of a projective space code by
such a matrix. There is exactly one such matrix in reduced row
echelon form and it will be denoted by $E(X)$.

\vspace{0.2cm}
\begin{example}
\label{exm:running} We consider the 3-dimensional subspace  $X$ of
$\F_2^7$ with the following eight elements.

\begin{footnotesize}
\begin{align*}
\begin{array}{cccccccc}
\text{1)} & (0 & 0 & 0 & 0 & 0 & 0 & 0) \\
\text{2)} & (1 & 0 & 1 & 1 & 0 & 0 & 0) \\
\text{3)} & (1 & 0 & 0 & 1 & 1 & 0 & 1) \\
\text{4)} & (1 & 0 & 1 & 0 & 0 & 1 & 1) \\
\text{5)} & (0 & 0 & 1 & 0 & 1 & 0 & 1) \\
\text{6)} & (0 & 0 & 0 & 1 & 0 & 1 & 1) \\
\text{7)} & (0 & 0 & 1 & 1 & 1 & 1 & 0) \\
\text{8)} & (1 & 0 & 0 & 0 & 1 & 1 & 0)
\end{array} .
\end{align*}
\end{footnotesize}
The basis of $X$ can be represented by a $3 \times 7$ matrix whose
rows form a basis for the subspace. There are 168 different
matrices for the 28 different basis. Many of these matrices are in
row echelon form. One of them is

\begin{footnotesize}
\begin{align*}
\left[ \begin{array}{ccccccc}
1 & 0 & 1 & 0 & 0 & 1 & 1 \\
0 & 0 & 1 & 1 & 1 & 1 & 0 \\
0 & 0 & 0 & 1 & 0 & 1 & 1
\end{array}
\right] .
\end{align*}
\end{footnotesize}
Exactly one of these 168 matrices is in reduced row echelon form.

\begin{footnotesize}
\begin{align*}
E(X)=\left[ \begin{array}{ccccccc}
1 & 0 & 0 & 0 & 1 & 1 & 0 \\
0 & 0 & 1 & 0 & 1 & 0 & 1 \\
0 & 0 & 0 & 1 & 0 & 1 & 1
\end{array}
\right] .
\end{align*}
\end{footnotesize}
\end{example}
\vspace{0.6cm}

A {\it Ferrers diagram} represents partitions as patterns of dots
with the $i$-th row having the same number of dots as the $i$-th
term in the partition~\cite{vLWi92,AnEr04,Sta86}. A Ferrers
diagram satisfies the following conditions.
\begin{itemize}
\item
The number of dots in a row is at most the number of dots in the
previous row.

\item All the dots are shifted to the right of the diagram.
\end{itemize}
The {\it number of rows (columns)} of the Ferrers diagram $\cF$ is
the number of dots in the rightmost column (top row) of $\cF$. If
the number of rows in the Ferrers diagram is $m$ and the number of
columns is $\eta$ we say that it is an $m \times \eta$ Ferrers
diagram.

If we read the Ferrers diagram by columns we get another partition
which is called the {\it conjugate} of the first one. If the
partition forms an $m \times \eta$ Ferrers diagram then the
conjugate partition form an $\eta \times m$ Ferrers diagram.

\vspace{0.2cm}
\begin{example}
\label{ex:Ferrers} Assume we have the partition $6+5+5+3+2$ of 21.
The $5 \times 6$ Ferrers diagram $\cF$ of this partition is given
by

\begin{footnotesize}
\begin{align*}
\begin{array}{cccccc}
\bullet & \bullet & \bullet & \bullet & \bullet & \bullet \\
& \bullet & \bullet & \bullet & \bullet & \bullet \\
& \bullet & \bullet & \bullet & \bullet & \bullet \\
& & & \bullet & \bullet & \bullet \\
& & & & \bullet & \bullet
\end{array}
\end{align*}
\end{footnotesize}\\
The number of rows in $\cF$ is 5 and the number of columns is 6.
The conjugate partition is the partition $5+5+4+3+3+1$ of 21 and
its $6 \times 5$ Ferrers diagram is given by

\begin{footnotesize}
\begin{align*}
\begin{array}{ccccc}
\bullet & \bullet & \bullet & \bullet &  \bullet \\
\bullet & \bullet & \bullet & \bullet & \bullet \\
& \bullet & \bullet & \bullet & \bullet \\
& & \bullet & \bullet & \bullet \\
& & \bullet & \bullet & \bullet  \\
& & & & \bullet
\end{array} .
\end{align*}
\end{footnotesize}
\end{example}
\vspace{0.3cm}

\begin{remark}
Our definition of Ferrers diagram is slightly different from the
usual definition~\cite{vLWi92,AnEr04,Sta86}, where the dots in
each row are shifted to the left of the diagram.
\end{remark}

Each $k$-dimensional subspace $X$ of $\F_q^n$ has an {\it
identifying vector} $v(X)$. $v(X)$ is a binary vector of length
$n$ and weight $k$, where the {\it ones} in $v(X)$ are in the
positions (columns) where $E(X)$ has the leading {\it ones} (of
the rows).

\vspace{0.2cm}
\begin{example}
Consider the 3-dimensional subspace $X$ of
Example~\ref{exm:running}. Its identifying vector is
$v(X)=1011000$.
\end{example}

\begin{remark}
We can consider an identifying vector $v(X)$ for some
$k$-dimensional subspace $X$ as a characteristic vector of a
$k$-subset. This coincides with the definition of rank- and
order-preserving map $\phi$ from $\Gr$ onto the lattice of subsets
of an $n$-set, given by Knuth~\cite{Knu71} and discussed by
Milne~\cite{Milne82}.
\end{remark}

The following lemma is easily observed.

\begin{lemma}
\label{lem:k_positions} Let $X$ be a $k$-dimensional linear
subspace of $\F_q^n$, $v(X)$ its identifying vector, and
$i_1,i_2,\ldots,i_k$ the positions in which $v(X)$ has {\it ones}.
Then for each nonzero element $u \in X$ the leftmost {\it one} in
$u$ is in position $i_j$ for some $1 \leq j \leq k$.
\end{lemma}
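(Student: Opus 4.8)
The plan is to argue directly from the structure of the reduced row echelon form $E(X)$. Let $i_1 < i_2 < \dots < i_k$ be the positions of the {\it ones} in $v(X)$; by definition these are exactly the columns of $E(X)$ that contain the leading {\it one} of some row, and moreover the leading {\it one} of the $j$-th row of $E(X)$ sits in column $i_j$. Fix a nonzero $u \in X$. Since the rows $r_1, r_2, \dots, r_k$ of $E(X)$ form a basis of $X$, we may write $u = \sum_{j=1}^k \lambda_j r_j$ with $\lambda_j \in \F_q$ not all zero. Let $j_0$ be the smallest index $j$ with $\lambda_j \neq 0$. I claim the leftmost nonzero entry of $u$ is precisely in position $i_{j_0}$, which proves the lemma.

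For the claim, first observe that in every column strictly to the left of $i_{j_0}$ each of the rows $r_{j_0}, r_{j_0+1}, \dots, r_k$ is zero: by the echelon property the leading {\it one} of $r_j$ is in column $i_j \ge i_{j_0}$, and all entries of $r_j$ before its leading {\it one} vanish; hence for columns $< i_{j_0}$ all these rows contribute nothing, and since $\lambda_1 = \dots = \lambda_{j_0-1} = 0$ the vector $u$ is identically zero on those columns. Next, in column $i_{j_0}$ itself, row $r_{j_0}$ has a {\it one} (its leading {\it one}), while for $j > j_0$ the entry of $r_j$ in column $i_{j_0}$ is zero: this is exactly the third condition of reduced row echelon form, namely that the leading {\it one} of $r_{j_0}$ in column $i_{j_0}$ is the only nonzero entry in that column. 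Therefore the $i_{j_0}$-th coordinate of $u$ equals $\lambda_{j_0} \cdot 1 = \lambda_{j_0} \neq 0$, so the leftmost {\it one} of $u$ is in position $i_{j_0}$, and $1 \le j_0 \le k$.

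I do not anticipate a genuine obstacle here; the only point that requires care is making sure the \emph{reduced} form (not merely row echelon form) is used, since the argument that column $i_{j_0}$ has no contribution from rows $r_j$ with $j > j_0$ relies on the ``only nonzero entry in its column'' condition. Without reducedness one would still get that $u$ vanishes on columns $< i_{j_0}$, but the $i_{j_0}$-th entry could a priori be a combination $\lambda_{j_0} + (\text{entries of later rows})$; reducedness is what forces that combination to collapse to $\lambda_{j_0}$. An alternative, slightly slicker phrasing is to note that the map sending $u \in X \setminus \{0\}$ to the column of its leftmost {\it one} factors through the ``pivot'' structure of $E(X)$, but the explicit computation above is the cleanest and is what I would write.
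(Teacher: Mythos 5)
Your proof is correct, but it takes a different route from the paper's. The paper argues by contradiction and dimension count: the rows of $E(X)$ already give $k$ linearly independent elements $u_1,\dots,u_k$ whose leftmost \emph{ones} sit in positions $i_1,\dots,i_k$, and if some nonzero $u\in X$ had its leftmost \emph{one} in a position outside $\{i_1,\dots,i_k\}$, then $u,u_1,\dots,u_k$ would be $k+1$ linearly independent vectors in the $k$-dimensional space $X$, a contradiction. You instead expand $u$ in the row basis of $E(X)$ and compute directly that the leftmost nonzero coordinate of $u$ is exactly $i_{j_0}$, where $j_0$ is the first index with nonzero coefficient; this is slightly more informative (it identifies which pivot column is hit) at the cost of an explicit coordinate computation, while the paper's argument is shorter and coordinate-free beyond the echelon structure. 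One small inaccuracy in your closing remark: reducedness is not actually needed for the step you flag. For $j>j_0$ the leading \emph{one} of $r_j$ lies in column $i_j>i_{j_0}$, so by the plain echelon property $r_j$ is already zero in column $i_{j_0}$; the ``only nonzero entry in its column'' condition you invoke does hold and makes the step valid, but ordinary row echelon form (with the earlier rows killed because $\lambda_1=\cdots=\lambda_{j_0-1}=0$) would suffice. This does not affect the correctness of your proof.
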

\begin{proof}
Clearly, for each $j$, $1 \leq j \leq k$, there exists an element
$u_j \in X$ whose leftmost {\it one} is in position $i_j$.
Moreover, $u_1,u_2,\ldots, u_k$ are linearly independent. Assume
the contrary, that there exists an element $u \in X$ whose
leftmost {\it one} is in position $\ell \notin \{ i_1 , \ldots ,
i_k \}$. This implies that $u,u_1,u_2,\ldots, u_k$ are linearly
independent and the dimension of $X$ is at least $k+1$, a
contradiction.
\end{proof}

The following result will play an important role in the proof that
our constructions for error-correcting codes in the projective
space have the desired minimum distance.
\begin{lemma}
\label{lem:dist_constant} If $X$ and $Y$ are two subspaces of
$\Ps$ with identifying vectors $v(X)$ and $v(Y)$, respectively,
then $d_S (X,Y) \geq d_H(v(X),v(Y))$, where $d_H(u,v)$ denotes the
Hamming distance between $u$ and $v$.
\end{lemma}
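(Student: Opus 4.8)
The plan is to relate the subspace distance $d_S(X,Y)$ to the Hamming distance $d_H(v(X),v(Y))$ by a dimension-counting argument built on Lemma~\ref{lem:k_positions}. Write $k_1 = \dim X$ and $k_2 = \dim Y$, so $v(X)$ has weight $k_1$ and $v(Y)$ has weight $k_2$. The key quantity is $\dim(X \cap Y)$: if I can show that $\dim(X\cap Y)$ is at most the number of positions in which \emph{both} $v(X)$ and $v(Y)$ have a one, i.e. $\dim(X \cap Y) \le |\{i : v(X)_i = v(Y)_i = 1\}|$, then the result follows by arithmetic. Indeed, denoting that overlap count by $s$, we have $d_H(v(X),v(Y)) = (k_1 - s) + (k_2 - s) = k_1 + k_2 - 2s \le k_1 + k_2 - 2\dim(X\cap Y) = d_S(X,Y)$.

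So the heart of the proof is the inequality $\dim(X \cap Y) \le s$. First I would consider $Z = X \cap Y$, a subspace of dimension $\dim(X\cap Y)$, and look at its identifying vector $v(Z)$; by Lemma~\ref{lem:k_positions} applied to $Z$, every nonzero element of $Z$ has its leftmost one in one of the positions where $v(Z)$ has a one. But $Z \subseteq X$, so every nonzero $u \in Z$ also lies in $X$, and again by Lemma~\ref{lem:k_positions} (now applied to $X$) the leftmost one of $u$ must be among the $k_1$ positions where $v(X)$ has a one. Hence the support positions of $v(Z)$ are contained in the support of $v(X)$; symmetrically they are contained in the support of $v(Y)$. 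Therefore the support of $v(Z)$ is contained in the intersection of the two supports, which has size $s$, and since $v(Z)$ has weight $\dim Z = \dim(X \cap Y)$, we get $\dim(X\cap Y) \le s$, as needed.

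The main obstacle — really the only subtle point — is justifying that the support of $v(Z)$ sits inside the support of $v(X)$: one must be careful that Lemma~\ref{lem:k_positions} is being invoked for the right subspace, and that "the leftmost one of every nonzero element of $Z$ lies in the support of $v(X)$'' indeed forces the leading-one positions of $E(Z)$ (which are exactly the support of $v(Z)$) into the support of $v(X)$. This is immediate because each leading one of $E(Z)$ is the leftmost one of some (nonzero) basis vector of $Z$, hence of an element of $X$. Once this containment is in hand, the remainder is the elementary counting displayed above, and the proof is complete.
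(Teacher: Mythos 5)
Your proof is correct, but it takes a different route from the paper's. The paper works ``from the outside'': for each of the $r$ positions where $v(X)$ has a one and $v(Y)$ a zero it uses Lemma~\ref{lem:k_positions} to exhibit vectors of $X$ whose leftmost ones sit in those positions, so that they (and, implicitly, all their nontrivial linear combinations) lie outside $Y$, and symmetrically for the $s$ positions where $v(Y)$ has a one and $v(X)$ a zero; this yields $d_S(X,Y)\geq r+s = d_H(v(X),v(Y))$. You instead work ``from the inside'': you bound $\dim(X\cap Y)$ by the size of the overlap of the supports, by noting that the leading-one positions of $E(X\cap Y)$ are leftmost-one positions of nonzero elements of both $X$ and $Y$, hence, by Lemma~\ref{lem:k_positions} applied to each of $X$ and $Y$, lie in $\mathrm{supp}(v(X))\cap\mathrm{supp}(v(Y))$; the inequality then follows from $d_S = k_1+k_2-2\dim(X\cap Y)$ and $d_H = k_1+k_2-2s$. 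Your version has the mild advantage of using only the stated form of Lemma~\ref{lem:k_positions} (containment of leftmost-one positions) and sidesteps the point the paper leaves implicit, namely that the $r$ exhibited vectors span a subspace meeting $Y$ trivially (equivalently, that they remain independent modulo $X\cap Y$); the paper's version, on the other hand, makes explicit which dimensions of $X$ are ``lost'' and which of $Y$ are ``gained,'' which matches the insertion/deletion interpretation of the subspace distance. Both arguments are sound and of comparable length.
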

\begin{proof}
Let $i_1,...,i_r$ be the positions in which $v(X)$ has {\it ones}
and $v(Y)$ has {\it zeroes} and $j_1,...,j_s$ be the positions in
which $v(Y)$ has {\it ones} and $v(X)$ has {\it zeroes}. Clearly,
$r+s=d_H(v(X),v(Y))$. Therefore, by Lemma~\ref{lem:k_positions},
$X$ contains $r$ linearly independent vectors $u_1,...,u_r$ which
are not contained in $Y$. Similarly, $Y$ contains $s$ linearly
independent vectors which are not contained in $X$. Thus, $d_S
(X,Y) \geq r+s = d_H(v(X),v(Y))$.
\end{proof}

The {\it echelon Ferrers form} of a vector $v$ of length $n$ and
weight $k$, $EF(v)$, is the $k\times n$ matrix in reduced row
echelon form with leading entries (of rows) in the columns indexed
by the nonzero entries of $v$ and $"\bullet"$  in all entries
which do not have terminals {\it zeroes} or {\it ones}. A
$"\bullet"$ will be called in the sequel a {\it dot}. This
notation is also given in~\cite{vLWi92,Sta86}. The dots of this
matrix form the Ferrers diagram of $EF(v)$. If we substitute
elements of $\F_q$ in the dots of $EF(v)$ we obtain a
$k$-dimensional subspace $X$ of $\Ps$. $EF(v)$ will be called also
the echelon Ferrers form of $X$.

\vspace{0.2cm}

\begin{example}
For the vector $v=1001001$, the echelon Ferrers form $EF(v)$ is
the following $3 \times 7$ matrix,

\begin{footnotesize}
\begin{align*}
EF(v)=\left[ \begin{array}{ccccccc}
1 & \bullet & \bullet & 0 & \bullet & \bullet & 0 \\
0 & 0 & 0 & 1 & \bullet & \bullet & 0 \\
0 & 0 & 0 & 0 & 0 & 0 & 1
\end{array}
\right]~.
\end{align*}
\end{footnotesize}\\
$EF(v)$ has the following $2 \times 4$ Ferrers diagram
\begin{footnotesize}
\begin{align*}
\cF = \begin{array}{cccc}
\bullet & \bullet & \bullet & \bullet \\
&  & \bullet & \bullet
\end{array}~.
\end{align*}
\end{footnotesize}\\
\end{example}
\vspace{0.0cm}

Each binary word $v$ of length $n$ and weight $k$ corresponds to a
unique $k \times n$ matrix in an echelon Ferrers form. There are a
total of $\binom{n}{k}$ binary vectors of length $n$ and weight
$k$ and hence there are $\binom{n}{k}$ different $k \times n$
matrices in echelon Ferrers form.

\section{Ferrers Diagram Rank-Metric Codes}
\label{sec:partial}

In this section we start by defining the rank-metric codes. These
codes are strongly connected to constant dimension codes by a
lifting construction described by Silva, Kschischang, and
Koetter~\cite{SKK}. We define a new concept which is a Ferrers
diagram rank-metric code. Ferrers diagram rank-metric codes will
be the main building blocks of our projective space codes. These
codes present some questions which are of interest for themselves.

For two $m \times \eta$ matrices $A$ and $B$ over $\F_q$ the {\it
rank distance} is defined by
$$
d_R (A,B) \deff \text{rank}(A-B)~.
$$
A code $\cC$ is an $[m \times \eta,\varrho,\delta]$ rank-metric
code if its codewords are $m \times \eta$ matrices over $\F_q$,
they form a linear subspace of dimension $\varrho$ of $\F_q^{m
\times \eta}$, and for each two distinct codewords $A$ and $B$ we
have that $d_R (A,B) \geq \delta$. Rank-metric codes were well
studied~\cite{Gab85,Rot91,Del78}. It was proved (see~\cite{Rot91})
that for an $[m \times \eta,\varrho,\delta]$ rank-metric code
$\cC$ we have $\varrho \leq
\text{min}\{m(\eta-\delta+1),\eta(m-\delta+1)\}$. This bound is
attained for all possible parameters and the codes which attain it
are called {\it maximum rank distance} codes (or MRD codes in
short).

Let $v$ be a vector of length $n$ and weight $k$ and let $EF(v)$
be its echelon Ferrers form. Let $\cF$ be the Ferrers diagram of
$EF(v)$. $\cF$ is an $m \times \eta$ Ferrers diagram, $m \leq k$,
$\eta \leq n-k$. A code $\cC$ is an $[\cF,\varrho,\delta]$ {\it
Ferrers diagram rank-metric code} if all codewords are $m \times
\eta$ matrices in which all entries not in $\cF$ are {\it zeroes},
it forms a rank-metric code with dimension $\varrho$, and minimum
rank distance $\delta$. Let $\dim (\cF,\delta)$ be the largest
possible dimension of an $[\cF,\varrho,\delta]$ code.

\begin{theorem}
\label{thm:upper_rank} For a given $i$, $0 \leq i \leq \delta -1$,
if $\nu_i$ is the number of dots in $\cF$, which are not contained
in the first $i$ rows and are not contained in the rightmost
$\delta-1-i$ columns then $\text{min}_i \{ \nu_i \}$ is an upper
bound of $\dim (\cF,\delta)$.
\end{theorem}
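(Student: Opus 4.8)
The plan is to show that for each fixed $i$ with $0 \le i \le \delta-1$, the quantity $\nu_i$ is an upper bound on $\dim(\cF,\delta)$; taking the minimum over $i$ then gives the theorem. Fix such an $i$ and let $\cC$ be an $[\cF,\varrho,\delta]$ Ferrers diagram rank-metric code. I would partition the dots of $\cF$ into two sets: the set $S_i$ of dots that lie either in the first $i$ rows or in the rightmost $\delta-1-i$ columns (so $|S_i| = (\text{total dots}) - \nu_i$), and its complement $T_i$ of size $\nu_i$. The idea is a standard ``puncturing/shortening'' argument on the linear code $\cC$, viewed as a $\varrho$-dimensional subspace of $\F_q^{|\cF|}$ indexed by the dots.

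The key step is to argue that the projection map $\pi\colon \cC \to \F_q^{T_i}$ that forgets the coordinates in $S_i$ is \emph{injective}. Suppose $A \in \cC$ is nonzero and $\pi(A) = 0$, i.e.\ all the nonzero entries of $A$ sit inside $S_i$. Then $A$ is supported on at most $i$ rows together with at most $\delta-1-i$ columns. One then shows $\rank(A) \le i + (\delta-1-i) = \delta-1$: deleting the $i$ rows that may contain the ``row part'' of the support leaves a matrix whose nonzero entries all lie in the $\delta-1-i$ rightmost columns, hence of rank at most $\delta-1-i$, and restoring the $i$ rows raises the rank by at most $i$. This contradicts $\delta$ being the minimum distance of $\cC$ (a nonzero codeword of rank $< \delta$). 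Hence $\pi$ is injective, so $\varrho = \dim \cC = \dim \pi(\cC) \le |T_i| = \nu_i$. Since $\cC$ was an arbitrary $[\cF,\varrho,\delta]$ code, $\dim(\cF,\delta) \le \nu_i$, and as $i$ was arbitrary, $\dim(\cF,\delta) \le \min_i \nu_i$.

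The main obstacle, and the only place requiring care, is the rank bound ``support in $i$ rows and $\delta-1-i$ columns implies rank $\le \delta-1$.'' The subtlety is that ``rows'' here means the first $i$ rows of the ambient $m\times\eta$ matrix and ``columns'' the rightmost $\delta-1-i$ columns, and one must be precise that every nonzero entry of $A$ lies in the union of these two strips — this is exactly the meaning of $\pi(A)=0$, since $T_i$ consists of dots avoiding both strips. With that in hand the rank bound is the elementary fact $\rank(A) \le (\#\text{nonzero rows outside the column-strip}) + (\#\text{columns in the strip}) \le i + (\delta-1-i)$. I would also note at the outset that the definitions force $\delta - 1 \le \min\{m,\eta\}$ (else even an MRD code on $\cF$'s bounding box would be trivial), so the row/column counts $i$ and $\delta-1-i$ are meaningful; but this is not strictly needed since if $\nu_i$ were vacuously large the bound is still valid. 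No linear-algebra machinery beyond subadditivity of rank is required.
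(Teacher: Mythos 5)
Your proposal is correct and is essentially the paper's own argument: your injectivity of the projection onto the $\nu_i$ unconstrained positions is just the contrapositive of the paper's step that $\nu_i+1$ linearly independent codewords admit a nontrivial combination vanishing on those positions, and both proofs finish with the same rank bound $i+(\delta-1-i)=\delta-1$ for a nonzero codeword supported in the first $i$ rows and rightmost $\delta-1-i$ columns.
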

\begin{proof}
For a given $i$, $0 \leq i \leq \delta-1$, let $\cA_i$ be the set
of the $\nu_i$ positions of $\cF$ which are not contained in the
first $i$ rows and are not contained in the rightmost $\delta-1-i$
columns. Assume the contrary that there exists an
$[\cF,\nu_i+1,\delta]$ code $\cC$. Let $\cB = \{ B_1,B_2,
\ldots,B_{\nu_i+1} \}$ be a set of $\nu_i+1$ linearly independent
codewords in $\cC$. Since the number of linearly independent
codewords is greater than the number of entries in $\cA_i$ there
exists a nontrivial linear combination $Y=\sum_{j=1}^{\nu_i+1}
\alpha_j B_j$ for which the $\nu_i$ entries of $\cA_i$ are equal
{\it zeroes}. $Y$ is not the all-zeroes codeword since the $B_i$'s
are linearly independent. $\cF$ has outside $\cA_i$ exactly $i$
rows and $\delta-i-1$ columns. These $i$ rows can contribute at
most $i$ to the rank of $Y$ and the $\delta-i-1$ columns can
contribute at most $\delta-i-1$ to the rank of $Y$. Therefore $Y$
is a nonzero codeword with rank less than $\delta$, a
contradiction.

Hence, an upper bound on $\dim (\cF,\delta )$ is $\nu_i$ for each
$0 \leq i \leq \delta-1$. Thus, an upper bound on the dimension
$\dim (\cF,\delta )$ is $\text{min}_i \{ \nu_i \}$.
\end{proof}

\begin{conjecture}
The upper bound of Theorem~\ref{thm:upper_rank} is attainable for
any given set of parameters $q$, $\cF$, and $\delta$.
\end{conjecture}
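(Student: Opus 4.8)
The plan is to establish the matching lower bound $\dim(\cF,\delta)\ge\min_i\{\nu_i\}$ by constructing, for every $q$, every $m\times\eta$ Ferrers diagram $\cF$, and every $\delta$, an explicit $[\cF,\min_i\{\nu_i\},\delta]$ code; by Theorem~\ref{thm:upper_rank} this is enough. First I would dispose of the degenerate cases: when $\delta=1$ the whole space of matrices supported on $\cF$ is such a code and $\min_i\{\nu_i\}=\abs{\cF}$, and when $\delta>\min\{m,\eta\}$ only the zero code exists and the bound is $0$. So assume $2\le\delta\le\min\{m,\eta\}$, and, transposing and permuting rows and columns if necessary (operations that preserve rank distance and carry Ferrers diagram rank-metric codes on $\cF$ to such codes on the conjugate diagram), assume $m\le\eta$.

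A first attempt is to start from a maximum rank distance code $\cC_0\subseteq\F_q^{m\times\eta}$ of minimum rank distance $\delta$, whose $\F_q$-dimension is $\eta(m-\delta+1)$, and to pass to the subcode $\cC=\{A\in\cC_0:A_{i,j}=0\text{ for every }(i,j)\notin\cF\}$. Each of the $m\eta-\abs{\cF}$ forbidden positions imposes one $\F_q$-linear condition on $\cC_0$, so $\dim\cC\ge\abs{\cF}-\eta(\delta-1)$, and $\cC$ inherits minimum rank distance at least $\delta$. This already attains the bound when $\cF$ is a full rectangle (the classical MRD case), and with a smarter choice of $\cC_0$ and of the way $\cF$ sits inside the rectangle it can be pushed somewhat further. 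But for a general staircase $\cF$ it falls short of $\min_i\{\nu_i\}$: the forbidden positions are being counted as independent constraints, whereas the Ferrers structure ought to make many of them redundant.

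To reach the bound for arbitrary $\cF$ I would build the code directly from the column profile $c_1\le c_2\le\dots\le c_\eta=m$ of $\cF$, where column $j$ contains dots in rows $1,\dots,c_j$. Group the columns left to right into maximal blocks of equal height and place on each block a Gabidulin-type code built from linearized polynomials whose $q$-degree is truncated to match the block height, splicing the blocks by a staircase generator matrix so that (i) the support constraint $A_{i,j}=0$ for $(i,j)\notin\cF$ holds automatically and (ii) the number of free $\F_q$-parameters equals $\min_i\{\nu_i\}$. An alternative is an induction that removes the leftmost column (or, when more convenient, the top row) of $\cF$, together with a bookkeeping lemma relating $\min_i\{\nu_i\}$ for $\cF$ to the corresponding quantity for the smaller diagram, and a ``fresh evaluation point'' step guaranteeing that the reinstated column raises the rank of a codeword whenever necessary.

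The hard part will be proving that \emph{every} nonzero codeword of such a tailored construction has rank at least $\delta$ --- not just the generic full-rank words. This is genuinely delicate when $\min_i\{\nu_i\}$ is attained at an interior index $i$, i.e.\ when $\cF$ has a pronounced step: there the set $\cA_i$ from the proof of Theorem~\ref{thm:upper_rank} leaves outside it both $i$ rows and $\delta-1-i$ columns, and for an arbitrary $\F_q$-linear combination of the staircase blocks --- whose supports overlap in a complicated way --- one must control the row contribution and the column contribution simultaneously. Over small fields, in particular $q=2$, one cannot fall back on the ``$q^m$ is large enough'' heuristic that underlies ordinary Gabidulin codes, so a new ingredient seems to be needed: perhaps a basis of linearized polynomials adapted to the profile $(c_j)$, or a probabilistic or algebraic-geometry existence argument valid for all sufficiently large $q$ with the finitely many small fields handled separately. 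This is exactly why the statement is posed as a conjecture; I expect near-rectangular diagrams, single-step diagrams, and small $\delta$ to yield to the methods above, with the fully general case remaining the real obstacle.
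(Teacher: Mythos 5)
This statement is a \emph{conjecture}: the paper itself offers no proof, only partial evidence --- Theorem~\ref{thm:bound_attain} constructs $[\cF,\sum_{i=1}^{\eta-\delta+1}\gamma_i,\delta]$ codes attaining the bound of Corollary~\ref{cor:upper_rank} in the special case where the $\delta-1$ rightmost columns of $\cF$ are full (which in particular settles $\delta\le 2$), plus a couple of ad hoc examples for $\delta=3$; the general case is explicitly listed as an open problem in Section~\ref{sec:conclude}. Your proposal does not close this gap. The parts of it that can be checked are fine: the degenerate cases $\delta=1$ and $\delta>\min\{m,\eta\}$ are handled correctly, conjugation/transposition does preserve the problem, and the ``subcode of an MRD code'' step with the dimension count $\dim\cC\ge\abs{\cF}-\eta(\delta-1)$ is correct and is essentially the same idea as the paper's Remark~6, which only reaches the bound under the full-rightmost-columns hypothesis.

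The genuine gap is precisely the step you flag yourself: for an arbitrary Ferrers diagram, producing $\min_i\{\nu_i\}$ independent parameters while guaranteeing that \emph{every} nonzero codeword has rank at least $\delta$. Your ``block-spliced Gabidulin'' and ``remove a column/row and induct'' suggestions are descriptions of strategies, not arguments: no generator matrix is specified, no distance proof is attempted, and the hardest regime --- when the minimum in Theorem~\ref{thm:upper_rank} is attained at an interior index $i$, so that both row and column contributions to the rank must be controlled simultaneously, over small $q$ --- is exactly where you concede ``a new ingredient seems to be needed.'' Since that is the entire content of the conjecture, the proposal is a reasonable research plan but not a proof, and it does not go beyond what the paper already establishes.
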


If we use $i=0$ or $i=\delta-1$ in Theorem~\ref{thm:upper_rank} we
obtain the following result.

\begin{cor}
\label{cor:upper_rank} An upper bound on $\dim (\cF,\delta )$ is
the minimum number of dots that can be removed from $\cF$ such
that the diagram remains with at most $\delta-1$ rows of dots or
at most $\delta-1$ columns of dots.
\end{cor}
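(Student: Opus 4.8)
The plan is to derive this immediately from Theorem~\ref{thm:upper_rank} by examining the two extreme choices of the index $i$. The theorem says that $\min_i\{\nu_i\}$ is an upper bound on $\dim(\cF,\delta)$, so in particular $\nu_0$ and $\nu_{\delta-1}$ are each upper bounds. All that remains is to translate the quantities $\nu_0$ and $\nu_{\delta-1}$ into the language of ``removing dots so that few rows (or columns) remain.''

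First I would handle $i=\delta-1$. By definition, $\nu_{\delta-1}$ is the number of dots of $\cF$ not contained in the first $\delta-1$ rows (the ``rightmost $\delta-1-i$ columns'' constraint is vacuous when $i=\delta-1$). Equivalently, if we delete from $\cF$ every dot lying in one of the first $\delta-1$ rows, the number of dots deleted is $|\cF| - \nu_{\delta-1}$, and what survives is a diagram whose dots all lie in rows $\delta,\delta+1,\dots$. I then observe that deleting all dots from the first $\delta-1$ rows is the \emph{minimum} number of deletions that leaves at most $\delta-1$ rows containing dots: any such deletion must empty all but at most $\delta-1$ of the nonempty rows of $\cF$, and since $\cF$ is a Ferrers diagram (rows shifted right, row lengths weakly decreasing from top to bottom in our convention — here the longest rows are the top ones), the cheapest rows to empty are exactly the shortest ones. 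One must be slightly careful about the orientation convention adopted in this paper, but the point is simply that retaining the $\delta-1$ longest rows minimizes the number of discarded dots. Hence $\nu_{\delta-1}$ equals $|\cF|$ minus the minimum number of dots whose removal leaves at most $\delta-1$ rows of dots.

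Symmetrically, for $i=0$, the quantity $\nu_0$ counts the dots of $\cF$ not in the rightmost $\delta-1$ columns; removing all dots in those $\delta-1$ columns is the minimum-size deletion that leaves at most $\delta-1$ columns containing dots, by the analogous argument applied to the conjugate (column) partition. Therefore $\dim(\cF,\delta) \le \nu_0$ says exactly that $\dim(\cF,\delta)$ is at most the number of dots remaining after removing the minimum number of dots needed to leave at most $\delta-1$ columns of dots; and $\dim(\cF,\delta)\le \nu_{\delta-1}$ is the row version. Taking whichever of the two is smaller (i.e.\ keeping the larger count of removed dots) gives the stated bound: $\dim(\cF,\delta)$ is at most the minimum number of dots that can be removed from $\cF$ so that the diagram retains at most $\delta-1$ rows of dots \emph{or} at most $\delta-1$ columns of dots.

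There is essentially no obstacle here — the corollary is a reformulation, not a new argument. The only place requiring a moment's care is the combinatorial claim that, in a Ferrers diagram, the minimum number of dots to delete so that at most $\delta-1$ rows remain nonempty is achieved by emptying the shortest rows (and dually for columns); this follows directly from the monotonicity of row lengths built into the definition of a Ferrers diagram, so it needs only a sentence rather than a computation.
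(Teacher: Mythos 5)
Your high-level route is exactly the paper's (specialize Theorem~\ref{thm:upper_rank} to $i=0$ and $i=\delta-1$ and rephrase), but the translation step is complemented, and as written your identities do not support your final sentence. Deleting every dot lying in the first $\delta-1$ rows does \emph{not} leave a diagram with at most $\delta-1$ rows of dots --- it leaves rows $\delta,\delta+1,\dots,m$ nonempty, which may be far more than $\delta-1$ rows. The minimal deletion achieving ``at most $\delta-1$ rows of dots'' is the opposite one: \emph{keep} the $\delta-1$ longest (top) rows and remove everything outside them, which removes precisely the dots not contained in the first $\delta-1$ rows, i.e.\ exactly $\nu_{\delta-1}$ dots. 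So $\nu_{\delta-1}$ \emph{is} the minimum number of removed dots for the row option, not ``$|\cF|$ minus'' it as you conclude. Likewise, the minimal deletion leaving at most $\delta-1$ columns keeps the rightmost $\delta-1$ columns (the fullest ones, in this paper's right-justified convention) and removes the $\nu_0$ dots outside them; thus $\nu_0$ is the number of dots \emph{removed}, not ``the number of dots remaining after removing the minimum number of dots'' as you write, and your parenthetical ``keeping the larger count of removed dots'' has the same inversion.

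The error matters: if one takes your identifications literally, the bound you would obtain is $\dim(\cF,\delta)\le |\cF|-\max\{\nu_0,\nu_{\delta-1}\}$, which is not what Theorem~\ref{thm:upper_rank} yields and is not the corollary. With the identification corrected the argument is immediate and coincides with the paper's intent: Theorem~\ref{thm:upper_rank} gives $\dim(\cF,\delta)\le\nu_0$ and $\dim(\cF,\delta)\le\nu_{\delta-1}$, and these two quantities are exactly the minimal removal counts for the column and row options respectively, so $\dim(\cF,\delta)\le\min\{\nu_0,\nu_{\delta-1}\}$ is precisely the stated bound. Your supporting observation that monotonicity of row (column) lengths makes the shortest rows (sparsest columns) the cheapest to empty is correct and is the only combinatorial content needed; the fix is simply to keep, rather than delete, the first $\delta-1$ rows (respectively the rightmost $\delta-1$ columns).
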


\vspace{0.3cm}

\begin{remark}
$[m \times \eta,\varrho,\delta]$ MRD codes are one class of
Ferrers diagram rank-metric codes which attain the bound of
Corollary~\ref{cor:upper_rank} with equality. In this case the
Ferrers diagram has $m \cdot \eta$ dots.
\end{remark}

\begin{example}
Consider the following Ferrers diagram

\begin{footnotesize}
\begin{align*}
\cF= \begin{array}{cccc}
\bullet & \bullet & \bullet & \bullet \\
& & \bullet & \bullet\\
& &  & \bullet \\
& & & \bullet
\end{array}
\end{align*}
\end{footnotesize}\\
and $\delta=3$. By Corollary~\ref{cor:upper_rank} we have an upper
bound, $\dim (\cF,3) \leq 2$. But, if we use $i=1$ in
Theorem~\ref{thm:upper_rank} then we have a better upper bound,
$\dim (\cF,3) \leq 1$. This upper bound is attained with the
following generator matrix of an $[ \cF ,1,3]$ rank-metric code.
\[
\left(\begin{array}{cccc}
\bf 1 & \bf 0 & \bf 0 & \bf 0\\
0 &  0 & \bf 1 & \bf 0\\
0 & 0 & 0 & \bf 0\\
0 & 0 & 0 & \bf 1\end{array}\right).\]
\end{example}
\vspace{0.6cm}

When the bound of Theorem~\ref{thm:upper_rank} is attained? We
start with a construction of Ferrers diagram rank-metric codes
which attain the bound of Corollary~\ref{cor:upper_rank}. Assume
we have an $m \times \eta$, $m=\eta+\varepsilon$, $\varepsilon
\geq 0$, Ferrers diagram $\cF$ and that the minimum in the bound
of Corollary~\ref{cor:upper_rank} is obtained by removing all the
dots from the $\eta-\delta+1$ leftmost columns of $\cF$. Hence,
only the dots in the $\delta-1$ rightmost columns will remain. We
further assume that each of the $\delta-1$ rightmost columns of
$\cF$ have $m$ dots. The construction which follows is based on
the construction of MRD $q$-cyclic rank-metric codes given by
Gabidulin~\cite{Gab85}.

A code $\cC$ of length $m$ over $\F_{q^m}$ is called a {\it
$q$-cyclic code} if $(c_{0},c_{1},...,c_{m-1}) \in \cC$ implies
that $(c_{m-1}^{q},c_{0}^{q},...,c_{m-2}^{q})\in \cC$.

For a construction of $[m \times m, \varrho , \delta]$ rank-metric
codes, we use an isomorphism between the field with $q^m$
elements, $\F_{q^m}$, and the set of all $m$-tuples over $\F_q$,
$\F_q^m$. We use the obvious isomorphism by the representation of
an element $\alpha$ in the extension field $\F_{q^m}$ as $\alpha =
( \alpha_1 , \ldots, \alpha_m )$, where $\alpha_i$ is an element
in the ground field $\F_q$. Usually, we will leave to the reader
to realize when the isomorphism is used as this will be easily
verified from the context.

A codeword $c$ in an $[m \times m, \varrho , \delta]$ rank-metric
code $\cC$, can be represented by a vector $c=(c_0 , c_1 , \ldots
, c_{m-1})$, where $c_i \in \F_{q^m}$ and the generator matrix $G$
of $\cC$ is an $K \times m$ matrix, $\varrho = mK$. It was proved
by Gabidulin~\cite{Gab85} that if $\cC$ is an MRD $q$-cyclic code
then the generator polynomial of $\cC$ is the linearized
polynomial
$G(x)=\overset{m-K}{\underset{i=0}{\sum}}g_{i}x^{q^{i}}$, where
$g_i \in \F_{q^m}$, $g_{m-K}=1$, $m=K+\delta-1$, and its generator
matrix $G$ has the form

\begin{scriptsize}
\begin{align*}
\left(\begin{array}{cccccccc}
g_{0} & g_{1} & \cdots & g_{m-K-1} & 1 & 0 & \cdots & \cdots \\
0 & g_{0}^q & g_{1}^q & \cdots & g_{m-K-1}^q & 1 & \cdots & \cdots \\
0 & 0 & g_{0}^{q^2} & \cdots & \cdots & g_{m-K-1}^{q^2} & 1 & \cdots \\
\cdots & \cdots & \cdots & \cdots & \cdots & \cdots & \cdots  & \cdots \\
0 & \cdots & \cdots & \cdots & \cdots & \cdots &
g_{m-K-1}^{q^{K-1}} & 1
\end{array}\right).
\end{align*}
\end{scriptsize}

Hence, a codeword $c \in \cC$, $c \in (\F_{q^m})^m$, derived from
the information word $(a_0,a_1, \ldots,a_{K-1})$, where $a_i \in
\F_{q^m}$, i.e. $c = (a_0,a_1, \ldots,a_{K-1})G$, has the form

\begin{equation*}
c= (a_0 g_0 , a_0 g_1 + a_1 g_0^q , \ldots,a_{K-2}+a_{K-1}
g_{m-K-1}^{q^{K-1}} , a_{K-1})~.
\end{equation*}

We define an $[m \times \eta, m(\eta-\delta+1) , \delta]$
rank-metric code $\cC'$, $m = \eta + \varepsilon$, derived from
$\cC$ as follows:

$$
\cC' = \{ (c_0 , c_1 , \ldots , c_{\eta-1}) ~:~ (0,\ldots,0,c_0 ,
c_1 , \ldots , c_{\eta-1}) \in \cC \} .
$$

\begin{remark}
$\cC'$ is also an MRD code.
\end{remark}

We construct an $[\cF, \ell , \delta]$ Ferrers diagram rank-metric
code $\cC_{\cF} \subseteq \cC'$, where $\cF$ is an $m \times \eta$
Ferrers diagram. Let $\gamma_i$, $1 \leq i \leq \eta$, be the
number of dots in column $i$ of $\cF$, where the columns are
indexed from left to right. A codeword of $\cC_{\cF}$ is derived
from a codeword of $c \in \cC$ by satisfying a set of $m$
equations implied by

\begin{equation}
\label{eq:equations}
\begin{array}{c}\left(a_{0}g_{0},a_{0}g_{1}+a_{1}g_{0}^{q},\ldots,a_{K-2}+a_{K-1}g_{m-K-1}^{q^{K-1}},a_{K-1}\right)\\

=\left(\overset{\varepsilon}{\overbrace{\begin{array}{c}
0\\
\vdots\\
0\end{array}~\ldots~\begin{array}{c}
0\\
\vdots\\
0\end{array}}~}f_{1}~\ldots
f_{K-\varepsilon}\overset{\delta-1}{~\overbrace{\begin{array}{c}
\bullet\\
\vdots\\
\bullet\end{array}~\ldots~\begin{array}{c}
\bullet\\
\vdots\\
\bullet\end{array}}}\right) \end{array},
\end{equation}

\noindent where $f_i = ( \underset{\gamma_i} {\underbrace{\bullet
\cdots \bullet}} ~ \underset{m-\gamma_i}{\underbrace{0 \cdots
0}})^T$ is a column vector of length $m$, $1 \leq i \leq
K-\varepsilon$, and $u^T$ denotes the transpose of the vector $u$.
It is easy to verify that $\cC_{\cF}$ is a linear code.

By (\ref{eq:equations}) we have a system of $m=K+\delta-1$
equations with $K$ variables, $a_0,a_1, \ldots, a_{K-1}$. The
first $\varepsilon$ equations implies that $a_i=0$ for $0 \leq i
\leq \varepsilon -1$. The next $K-\varepsilon=\eta-\delta+1$
equations determine the values of the $a_i$'s, $\varepsilon \leq i
\leq K-1$, as follows. From the next equation $a_\varepsilon
g_0^{q^{\varepsilon}}=( \underset{\gamma_1}{\underbrace{\bullet
\cdots \bullet}}~ \underset{m-\gamma_1}{\underbrace{00...0}})^T$
(this is the next equation after we substitute $a_i=0$ for $0 \leq
i \leq \varepsilon-1$), we have that $a_\varepsilon$ has
$q^{\gamma_1}$ solutions in $\F_{q^m}$, where each element of
$\F_{q^m}$ is represented as an $m$-tuple over $\F_q$. Given a
solution of $a_\varepsilon$, the next equation $a_\varepsilon
g_0^{q^{\varepsilon}}+a_{\varepsilon+1} g_1^{q^{\varepsilon+1}}=
(\underset{\gamma_2} {\underbrace{\bullet \cdots
\bullet}}~\underset{m-\gamma_2}{\underbrace{00...0}})^T$ has
$q^{\gamma_2}$ solutions for $a_{\varepsilon+1}$. Therefore, we
have that $a_0,a_1, \ldots, a_{K-1}$ have
$q^{\sum_{i=1}^{K-\varepsilon} \gamma_i}$ solutions and hence the
dimension of $\cC_{\cF}$ is $\sum_{i=1}^{K-\varepsilon} \gamma_i$
over $\F_q$. Note, that since each of the $\delta-1$ rightmost
columns of $\cF$ have $m$ dots, i.e. $\gamma_i=m$,
$K-\varepsilon+1 \leq i \leq \eta$ (no {\it zeroes} in the related
equations) it follows that any set of values for the $a_i$'s
cannot cause any contradiction in the last $\delta-1$ equations.
Also, since the values of the $K$ variables $a_0,a_1, \ldots,
a_{K-1}$ are determined for the last $\delta-1$ equations, the
values for the related $(\delta-1)m$ dots are determined. Hence
they do not contribute to the number of solutions for the set of
$m$ equations. Thus, we have

\begin{theorem}
\label{thm:bound_attain}
Let $\cF$ be an $m \times \eta$, $m \geq \eta$, Ferrers diagram.
Assume that each one of the rightmost $\delta-1$ columns of $\cF$
has $m$ dots, and the $i$-th column from the left of $\cF$ has
$\gamma_i$ dots. Then $\cC_{\cF}$ is an $[ \cF , \sum_{i=1}^{\eta
- \delta +1} \gamma_i , \delta ]$ code which attains the bound of
Corollary~\ref{cor:upper_rank}.
\end{theorem}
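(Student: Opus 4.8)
The plan is to verify the two assertions of Theorem~\ref{thm:bound_attain} in turn: first that $\cC_{\cF}$ has dimension exactly $\sum_{i=1}^{\eta-\delta+1}\gamma_i$, and then that this number coincides with the upper bound of Corollary~\ref{cor:upper_rank}, so that equality holds. The first part is essentially the computation carried out in the paragraph preceding the theorem, and I would simply organize it carefully. Starting from the definition $\cC_{\cF} = \{(c_0,\ldots,c_{\eta-1}) : (0,\ldots,0,c_0,\ldots,c_{\eta-1})\in\cC\}$ and imposing the Ferrers constraint via the system~(\ref{eq:equations}), I would note that the first $\varepsilon$ equations force $a_0 = \cdots = a_{\varepsilon-1} = 0$, and then proceed inductively: after fixing $a_\varepsilon,\ldots,a_{\varepsilon+j-1}$, the $(j+1)$-st remaining equation is a single affine equation in $a_{\varepsilon+j}$ over $\F_{q^m}$ whose solution set, written out in the $m$-tuple representation, is constrained only in the $\gamma_{j+1}$ coordinates that must match dots (the other $m-\gamma_{j+1}$ coordinates being forced to zero), giving exactly $q^{\gamma_{j+1}}$ solutions. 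Multiplying over $j = 1,\ldots,K-\varepsilon = \eta-\delta+1$ yields $|\cC_{\cF}| = q^{\sum_{i=1}^{\eta-\delta+1}\gamma_i}$. The last $\delta-1$ equations impose no further restriction because each of those columns is full ($\gamma_i = m$), so the solution simply determines the dot values there without cutting down the count. Linearity of $\cC_{\cF}$ was already observed, so the dimension over $\F_q$ is $\sum_{i=1}^{\eta-\delta+1}\gamma_i$.

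For the second part I would show that the Corollary~\ref{cor:upper_rank} bound equals this same quantity. By hypothesis the minimum in that bound is achieved by deleting all dots outside the $\delta-1$ rightmost columns — i.e. by deleting the dots in the $\eta-\delta+1$ leftmost columns, which number exactly $\sum_{i=1}^{\eta-\delta+1}\gamma_i$ (since column $i$ from the left has $\gamma_i$ dots). After that deletion only the $\delta-1$ rightmost columns of dots remain, which is at most $\delta-1$ columns, so the deleted configuration is admissible for the corollary. Hence the bound is at most $\sum_{i=1}^{\eta-\delta+1}\gamma_i$. Combined with the first part, which exhibits a code of exactly that dimension, we get $\dim(\cF,\delta) = \sum_{i=1}^{\eta-\delta+1}\gamma_i$, so $\cC_{\cF}$ attains the bound.

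The step I expect to require the most care is the inductive unwinding of~(\ref{eq:equations}): one must check that after substituting the already-determined values $a_0,\ldots,a_{\varepsilon+j-1}$, the next equation really does reduce to $a_{\varepsilon+j}\cdot(\text{a nonzero power of }g_0)$ equals a prescribed $m$-tuple pattern (dots in $\gamma_{j+1}$ positions, zeros elsewhere), with no hidden coupling to later variables — this is where the $q$-cyclic/linearized structure of Gabidulin's generator matrix, and in particular the staircase shape with a leading $g_0^{q^{i}}$ in each row, is used. Once that reduction is in place the counting is routine, since multiplication by a fixed nonzero element of $\F_{q^m}$ is an $\F_q$-linear bijection, so the number of $a_{\varepsilon+j}$ matching a pattern that fixes $m-\gamma_{j+1}$ coordinates to zero and leaves $\gamma_{j+1}$ free is $q^{\gamma_{j+1}}$. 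A secondary point worth stating explicitly is that $\cC_{\cF}\subseteq\cC'$ indeed consists of matrices supported on $\cF$ — the left-hand side of~(\ref{eq:equations}) automatically has zeros outside $\cF$ once the constraints are imposed — so it is a genuine $[\cF,\varrho,\delta]$ code with the inherited minimum rank distance $\delta$.
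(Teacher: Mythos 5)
Your proposal is correct and follows essentially the same route as the paper: the paper's proof is precisely the solution-counting of the system~(\ref{eq:equations}) in the paragraph preceding the theorem (first $\varepsilon$ equations force $a_0=\cdots=a_{\varepsilon-1}=0$, each subsequent equation contributes $q^{\gamma_i}$ solutions, and the last $\delta-1$ full columns impose no constraint), giving dimension $\sum_{i=1}^{\eta-\delta+1}\gamma_i$ with minimum rank distance inherited from the MRD code. Your added closing step — that deleting the leftmost $\eta-\delta+1$ columns is an admissible deletion, so the Corollary~\ref{cor:upper_rank} bound is at most $\sum_{i=1}^{\eta-\delta+1}\gamma_i$ and hence is attained — is exactly the (implicit) way the paper concludes attainment, so there is nothing to fix.
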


\vspace{0.3cm}

\begin{remark}
For any solution of $a_0,a_1, \ldots, a_{K-1}$ we have that $
(a_0,a_1, \ldots,a_{K-1})G = (0,\ldots,0,c_0 , c_1 , \ldots ,
c_{\eta-1}) \in \cC$ and $(c_0 , c_1 , \ldots , c_{\eta-1}) \in
\cC_{\cF}$.
\end{remark}

\begin{remark}
For any $[m \times \eta, m(\eta-\delta+1) , \delta]$ rank-metric
code $\cC'$, the codewords which have {\it zeroes} in all the
entries which are not contained in $\cF$ form an $[ \cF ,
\sum_{i=1}^{\eta - \delta +1} \gamma_i , \delta ]$ code. Thus, we
can use also any MRD codes, e.g. the codes described
in~\cite{Rot91}, to obtain a proof for
Theorem~\ref{thm:bound_attain}.
\end{remark}

\begin{remark}
Since $\cC_{\cF}$ is a subcode of an MRD code then we can use the
decoding algorithm of the MRD code for the decoding of our code.
Also note, that if $\cF$ is an $m \times \eta$, $m < \eta$,
Ferrers diagram then we apply our construction for the $\eta
\times m$ Ferrers diagram of the conjugate partition.
\end{remark}

When $\delta =1$ the bounds and the construction are trivial. If
$\delta=2$ then by definition the rightmost column and the top row
of an $m \times \eta$ Ferrers diagram always has $m$ dots and
$\eta$ dots, respectively. It implies that the bound of
Theorem~\ref{thm:upper_rank} is always attained with the
construction if $\delta =2$. This is the most interesting case
since in this case the improvement of our constant dimension codes
compared to the codes in~\cite{KK,SKK} is the most impressive (see
subsection~\ref{sec:parameters}). If $\delta >2$ the improvement
is relatively small, but we will consider this case as it is of
interest also from a theoretical point of view. Some constructions
can be given based on the main construction and other basic
constructions. We will give two simple examples for $\delta=3$.

\vspace{0.2cm}
\begin{example}
\label{ex:ferrers1} Consider the following Ferrers diagram

\begin{footnotesize}
\begin{align*}
\cF= \begin{array}{cccc}
\bullet & \bullet & \bullet & \bullet \\
& \bullet & \bullet & \bullet\\
& & \bullet & \bullet \\
& & & \bullet
\end{array}
\end{align*}
\end{footnotesize}
The upper bound on $\dim (\cF,3)$ is 3. It is attained with the
following basis with three $4 \times 4$ matrices.
\[
\left(\begin{array}{cccc}
\bf 0 & \bf 1 & \bf 0 & \bf 0\\
0 & \bf 0 & \bf 1 & \bf 0\\
0 & 0 & \bf 0 & \bf 0\\
0 & 0 & 0 & \bf 1\end{array}\right),\left(\begin{array}{cccc}
\bf 0 & \bf 0 & \bf 0 & \bf 1\\
0 & \bf 1 & \bf 0 & \bf 0\\
0 & 0 & \bf 1 & \bf 0\\
0 & 0 & 0 & \bf 0\end{array}\right),\left(\begin{array}{cccc}
\bf 1 & \bf 0 & \bf 0 & \bf 0\\
0 & \bf 1 & \bf 0 & \bf 0\\
0 & 0 & \bf 0 & \bf 1\\
0 & 0 & 0 & \bf 1\end{array}\right).\]

\end{example}
\vspace{0.6cm}

\begin{example}
Consider the following Ferrers diagram

\begin{footnotesize}
\begin{align*}
\cF= \begin{array}{cccc}
\bullet & \bullet & \bullet & \bullet \\
& \bullet & \bullet & \bullet \\
& \bullet & \bullet & \bullet \\
& & & \bullet
\end{array}
\end{align*}
\end{footnotesize}
The upper bound on $\dim (\cF,3)$ is 4. It is attained with the
basis consisting of four $4 \times 4$ matrices, from which three
are from Example~\ref{ex:ferrers1} and the last one is
\[
\left(\begin{array}{cccc}
\bf 1 & \bf 0 & \bf 1 & \bf 0\\
0 & \bf 0 & \bf 0 & \bf 1\\
0 & \bf 1 & \bf 0 & \bf 1\\
0 & 0 & 0 & \bf 0\end{array}\right).\]
\end{example}
\vspace{0.6cm}

As for more constructions, some can be easily generated by the
interested reader, but whether the upper bound of
Theorem~\ref{thm:upper_rank} can be attained for all parameters
remains an open problem.

\section{Error-Correcting Constant Dimension Codes}
\label{sec:eccGrass}

In this section we will describe our multilevel construction. The
construction will be applied to obtain error-correcting constant
dimension codes, but it can be adapted to construct
error-correcting projective space codes without any modification.
This will be discussed in the next section. We will also consider
the parameters and decoding algorithms for our codes. Without loss
of generality we will assume that $k \leq n-k$. This assumption
can be made as a consequence of the following
lemma~\cite{EV,XiFu07}.

\begin{lemma}
If $\C$ is an $(n,M,2 \delta ,k)_q$ constant dimension code then
$\C^\perp = \{ X^\perp ~:~ X \in \C \}$, where $X^\perp$ is the
orthogonal subspace of $X$, is an $(n,M,2 \delta ,n-k)_q$ constant
dimension code.
\end{lemma}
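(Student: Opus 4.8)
The plan is to show that the map $X \mapsto X^\perp$ is a bijection from $\C$ onto $\C^\perp$ that preserves the subspace distance $d_S$; since every $X \in \C$ has dimension $k$, every $X^\perp$ has dimension $n-k$, and then the minimum distance and cardinality claims follow immediately. First I would recall that $(\cdot)^\perp$ is an involution on $\Ps$: for any subspace $X \subseteq \F_q^n$ we have $(X^\perp)^\perp = X$ and $\dim X + \dim X^\perp = n$. The involution property shows the map is injective (if $X^\perp = Y^\perp$ then $X = (X^\perp)^\perp = (Y^\perp)^\perp = Y$), hence $|\C^\perp| = |\C| = M$, and it shows that $\dim X^\perp = n - k$ for every $X \in \C$, so $\C^\perp \subseteq \sG_q(n,n-k)$.

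The substantive step is the distance computation. I would invoke the standard identities for the orthogonal complement, namely $(U \cap V)^\perp = U^\perp + V^\perp$ and $(U + V)^\perp = U^\perp \cap V^\perp$, together with the dimension formula $\dim(A+B) = \dim A + \dim B - \dim(A \cap B)$. For $X, Y \in \C$, write $a = \dim X = \dim Y = k$ and $c = \dim(X \cap Y)$, so $d_S(X,Y) = 2k - 2c$. Then $\dim(X^\perp \cap Y^\perp) = \dim((X+Y)^\perp) = n - \dim(X+Y) = n - (2k - c)$, and therefore
\begin{align*}
d_S(X^\perp, Y^\perp) &= \dim X^\perp + \dim Y^\perp - 2\dim(X^\perp \cap Y^\perp) \\
&= 2(n-k) - 2\bigl(n - (2k-c)\bigr) = 2k - 2c = d_S(X,Y).
\end{align*}
Thus the map is an isometry, so $d_S(X^\perp, Y^\perp) \ge 2\delta$ for all distinct $X^\perp, Y^\perp \in \C^\perp$, which is the required minimum distance.

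I do not anticipate a serious obstacle here; the only point needing a little care is justifying the identity $(U+V)^\perp = U^\perp \cap V^\perp$ (equivalently $(U \cap V)^\perp = U^\perp + V^\perp$) over $\F_q$, which is a routine fact from linear algebra over any field once one fixes the nondegenerate bilinear form (the standard dot product) used to define $^\perp$. One should also note in passing that $d_S$ is indeed a metric — already cited in the excerpt — so that calling the map an isometry is meaningful. Everything else is bookkeeping with dimension counts, so the proof is short.
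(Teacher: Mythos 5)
Your proposal is correct. The paper itself does not prove this lemma---it is stated with a citation to the references on dual codes in the Grassmannian---and your argument (that $X \mapsto X^\perp$ is an involution on $\Ps$, hence injective, sends dimension $k$ to $n-k$, and preserves $d_S$ via $(X+Y)^\perp = X^\perp \cap Y^\perp$ together with the dimension formula) is exactly the standard proof those references use, with the dimension bookkeeping carried out correctly.
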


\subsection{Lifted codes}

Koetter and Kschischang~\cite{KK} gave a construction for constant
dimension Reed-Solomon like codes. This construction can be
presented more clearly in terms of rank-metric codes~\cite{SKK}.
Given an $[k \times (n-k),\varrho,\delta]$ rank-metric code $\cC$
we form an $(n,q^\varrho,2\delta,k)_q$ constant dimension code
$\C$ by {\it lifting} $\cC$, i.e., $\C = \{ [I_k ~ A] ~:~ A \in
\cC \}$, where $I_k$ is the $k \times k$ identity
matrix~\cite{SKK}. We will call the code $\C$ the {\it lifted
code} of $\cC$. Usually $\C$ is not maximal and it can be
extended. This extension requires to design rank-metric codes,
where the shape of a codeword is a Ferrers diagram rather than an
$k \times (n-k)$ matrix. We would like to use the largest possible
Ferrers diagram rank-metric codes. In the appropriate cases, e.g.
when $\delta=2$, we will use the codes constructed in
Section~\ref{sec:partial} for this purpose.

Assume we are given an echelon Ferrers form $EF(v)$ of a binary
vector $v$, of length $n$ and weight $k$, with a Ferrers diagram
$\cF$ and a Ferrers diagram rank-metric code $\cC_{\cF}$.
$\cC_{\cF}$ is lifted to a constant dimension code $\C_v$ by
substituting each codeword $A \in \cC_{\cF}$ in the columns of
$EF(v)$ which correspond to the {\it zeroes} of $v$. Note, that
depending on $\cF$ it might implies conjugating $\cF$ first.
Unless $v$ starts with an {\it one} and ends with a {\it zero}
(the cases in which $\cF$ is a $k \times (n-k)$ Ferrers diagram)
we also need to expand the matrices of the Ferrers diagram
rank-metric code to $k \times (n-k)$ matrices (which will be
lifted), where $\cF$ is in their upper right corner (and the new
entries are {\it zeroes}). As an immediate consequence
from~\cite{SKK} we have.

\begin{lemma}
\label{lem:dist_lift} If $\cC_{\cF}$ is an $[ \cF , \varrho ,
\delta ]$ Ferrers diagram rank-metric code then its lifted code
$\C_v$, related to an $k \times n$ echelon Ferrers form $EF(v)$,
is an $(n, q^\varrho , 2\delta , k)_q$ constant dimension code.
\end{lemma}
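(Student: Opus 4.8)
The plan is to reduce the statement to the already-established lifting result from \cite{SKK} by carefully tracking how the Ferrers diagram code $\cC_{\cF}$ sits inside the echelon Ferrers form $EF(v)$. First I would recall precisely what \cite{SKK} gives: if $\cC$ is an $[k \times (n-k), \varrho, \delta]$ rank-metric code, then its lift $\{ [I_k ~ A] : A \in \cC \}$ is an $(n, q^\varrho, 2\delta, k)_q$ constant dimension code; the key fact behind this is that for two $k \times (n-k)$ matrices $A$ and $B$ one has $d_S([I_k~A],[I_k~B]) = 2\,d_R(A,B)$, so the minimum rank distance $\delta$ translates directly into subspace distance $2\delta$, and the cardinality is preserved because distinct $A$ give distinct $[I_k~A]$.

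The main work is then to show that $\C_v$ is a subcode of the lift of an ordinary $[k \times (n-k), \varrho, \delta]$ rank-metric code, so that the \cite{SKK} result applies verbatim. To do this I would describe the embedding explicitly: each codeword $A \in \cC_{\cF}$ (an $m \times \eta$ matrix with zeroes outside $\cF$) is first expanded, if necessary, to a $k \times (n-k)$ matrix $\hat A$ by padding with zero rows and zero columns so that $\cF$ occupies the upper-right corner; then $\hat A$ is placed into the $n-k$ columns of $EF(v)$ indexed by the zero positions of $v$, while the $k$ columns indexed by the ones of $v$ carry the leading-one pattern of the reduced row echelon form. I would observe that the map $A \mapsto$ (the subspace spanned by the resulting $k \times n$ matrix) is exactly the lift composed with a fixed column permutation $\pi$ of $\F_q^n$ that sends the identity-columns to the one-positions of $v$ and the free columns to the zero-positions of $v$. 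Since a coordinate permutation of $\F_q^n$ is an isometry of $(\Ps, d_S)$ and maps Grassmannians to themselves, $\C_v = \pi(\{[I_k~\hat A] : A \in \cC_{\cF}\})$.

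Next I would note that $\{ \hat A : A \in \cC_{\cF} \}$ is itself an $[k \times (n-k), \varrho, \delta]$ rank-metric code: padding with zero rows and columns changes neither the rank of any matrix nor the rank of any difference, and it is clearly still a linear space of the same $\F_q$-dimension $\varrho$. Hence $\{ [I_k ~ \hat A] : A \in \cC_{\cF} \}$ is, by \cite{SKK}, an $(n, q^\varrho, 2\delta, k)_q$ constant dimension code, and applying the isometry $\pi$ preserves all three parameters $n$, $q^\varrho = |\cC_{\cF}|$ over $\F_q$, and $2\delta$, and keeps all codewords of dimension $k$. Therefore $\C_v$ is an $(n, q^\varrho, 2\delta, k)_q$ constant dimension code, as claimed.

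The only genuinely delicate point — and the one I would state carefully rather than gloss over — is verifying that the padding/column-placement really is a bijective, distance-scaling map, i.e. that the leading-one columns of $EF(v)$ behave exactly like the identity block $I_k$ after the permutation $\pi$ and that no two distinct $A$ yield the same subspace. This is essentially bookkeeping about reduced row echelon forms: the columns of $E(X)$ at the one-positions of $v$ are precisely the standard basis vectors $e_1, \dots, e_k$ (in order), so the submatrix on those columns is $I_k$ up to the permutation $\pi$, and distinctness follows since the reduced row echelon form of a subspace is unique. I expect this reduction to be routine once the embedding picture is set up, so the bulk of the proof is really just making that picture precise and then invoking Lemma-level facts about isometries and the \cite{SKK} lifting identity.
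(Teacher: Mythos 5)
Your proposal is correct and follows exactly the route the paper intends: the paper states the lemma as an immediate consequence of the lifting result of \cite{SKK}, after expanding the Ferrers diagram codewords to $k \times (n-k)$ matrices with $\cF$ in the upper right corner and placing them in $EF(v)$, which is precisely the embedding-plus-column-permutation argument you spell out. You simply make explicit the bookkeeping (padding preserves rank and dimension, the leading-one columns form $I_k$ up to a coordinate permutation, which is a $d_S$-isometry) that the paper leaves implicit.
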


\vspace{0.2cm}

\begin{example}
For the word $v= 1110000 $, its echelon Ferrers form
\begin{footnotesize}
\begin{align*}
EF(v)=\left[ \begin{array}{ccccccc}
1 & 0 & 0 & \bullet & \bullet & \bullet & \bullet \\
0 & 1 & 0 & \bullet & \bullet & \bullet & \bullet \\
0 & 0 & 1 & \bullet & \bullet & \bullet & \bullet
\end{array}
\right]~,
\end{align*}
\end{footnotesize}\\
the $3 \times 4$ matrix
\[
\left(\begin{array}{cccc}
\bf 1 & \bf 0 & \bf 1 & \bf 0\\
\bf 0 & \bf 0 & \bf 0 & \bf 1\\
\bf 0 & \bf 0 & \bf0 & \bf 0
\end{array}\right)\]
is lifted to the 3-dimensional subspace with the $3 \times 7$ generator matrix
\begin{footnotesize}
\begin{align*}
\left[ \begin{array}{ccccccc}
1 & 0 & 0 & \bf 1 & \bf 0 & \bf 1 & \bf 0\\
0 & 1 & 0 & \bf 0 & \bf 0 & \bf 0 & \bf 1\\
0 & 0 & 1 & \bf 0 & \bf 0 & \bf0 & \bf 0
\end{array}
\right]~.
\end{align*}
\end{footnotesize}\\
For the word $v= 1001001 $, its echelon Ferrers form
\begin{footnotesize}
\begin{align*}
EF(v)=\left[ \begin{array}{ccccccc}
1 & \bullet & \bullet & 0 & \bullet & \bullet & 0 \\
0 & 0 & 0 & 1 & \bullet & \bullet & 0 \\
0 & 0 & 0 & 0 & 0 & 0 & 1
\end{array}
\right]~,
\end{align*}
\end{footnotesize}\\
the $2 \times 4$ matrix
\[
\left(\begin{array}{cccc}
\bf 1 & \bf 0 & \bf 1 & \bf 0\\
0 &  0 & \bf 0 & \bf 1
\end{array}\right)\]
is lifted to the 3-dimensional subspace with the $3 \times 7$ generator matrix
\begin{footnotesize}
\begin{align*}
\left[ \begin{array}{ccccccc}
1 & \bf 1 & \bf 0 & 0 & \bf 1 & \bf 0 & 0\\
0 & 0 & 0 & 1 & \bf 0 & \bf 1 & 0\\
0 & 0 & 0 & 0 & 0 & 0 & 1
\end{array}
\right]~.
\end{align*}
\end{footnotesize}\\

\end{example}
\vspace{0.6cm}

The code described in~\cite{SKK} is the same as the code described
in Section~\ref{sec:partial}, where the identifying vector is $(1
\cdots 1 0 \cdots 0)$. If our lifted codes are the codes described
in Section~\ref{sec:partial} then the same decoding algorithm can
be applied. Therefore, the decoding in~\cite{SKK} for the
corresponding constant dimension code can be applied directly to
each of our lifted constant dimension codes in this case, e.g. it
can always be applied when $\delta=2$. It would be worthwhile to
permute the coordinates in a way that the identity matrix $I_k$
will appear in the first $k$ columns, from the left, of the reduced row echelon
form, and $\cF$ will appear in the upper right corner of the $k
\times n$ matrix. The reason is that the decoding of~\cite{SKK} is
described on such matrices.

\subsection{Multilevel construction}

Assume we want to construct an $(n,M,2\delta,k)_q$ constant
dimension code $\C$.

The first step in the construction is to choose a binary constant
weight code ${\bf C}$ of length $n$, weight $k$, and minimum
distance $2 \delta$. This code will be called the {\it skeleton
code}. Any constant weight code can be chosen for this purpose,
but different skeleton codes will result in different constant
dimension codes with usually different sizes. The best choice for
the skeleton code ${\bf C}$ will be discussed in the next
subsection. The next three steps are performed for each codeword
$c \in {\bf C}$.

The second step is to construct the echelon Ferrers form $EF(c)$.

The third step is to construct an $[\cF,\varrho,\delta]$ Ferrers
diagram rank-metric code $\cC_{\cF}$ for the Ferrers diagram $\cF$
of $EF(c)$. If possible we will construct a code as described in
Section~\ref{sec:partial}.

The fourth step is to lift $\cC_{\cF}$ to a constant dimension
code $\C_c$, for which the echelon Ferrers form of $X \in \C_c$ is
$EF(c)$.

Finally,
$$
\C = \bigcup_{c \in {\bf C}} \C_c ~.
$$

As an immediate consequence of Lemmas~\ref{lem:dist_constant}
and~\ref{lem:dist_lift} we have the following theorem.
\begin{theorem}
$\C$ is an $(n,M,2\delta,k)_q$ constant dimension code,
where $M= \sum_{c \in {\bf C}} |\C_c| $.
\end{theorem}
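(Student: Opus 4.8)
The plan is to verify the three parts of the claim $(n,M,2\delta,k)_q$ separately, with the size formula being essentially the definition and the real content lying in the minimum-distance bound. First, observe that every codeword $X \in \C$ lies in the Grassmannian $\Gr$: by construction $X \in \C_c$ for some $c \in {\bf C}$, and $\C_c$ is obtained by lifting a Ferrers diagram rank-metric code $\cC_{\cF}$ associated to the echelon Ferrers form $EF(c)$, so by Lemma~\ref{lem:dist_lift} every such $X$ has dimension exactly $k$. Hence $\C$ is a constant dimension code with ambient parameters $n$ and $k$. The size claim $M = \sum_{c \in {\bf C}} |\C_c|$ will follow once we check that the union is disjoint, i.e.\ that $\C_c \cap \C_{c'} = \emptyset$ for distinct $c, c' \in {\bf C}$; this holds because every $X \in \C_c$ has identifying vector $v(X) = c$ (the lifting places leading ones of $E(X)$ exactly in the support of $c$), and $c \neq c'$ forces the subspaces to differ.

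The heart of the argument is showing $d_S(X,Y) \ge 2\delta$ for all distinct $X,Y \in \C$. I would split into two cases according to whether $X$ and $Y$ come from the same skeleton codeword. If $X, Y \in \C_c$ for the same $c$, then both are lifts of codewords $A, B$ of the same Ferrers diagram rank-metric code $\cC_{\cF}$, with $A \ne B$; since $\cC_{\cF}$ has minimum rank distance $\delta$, Lemma~\ref{lem:dist_lift} gives $d_S(X,Y) \ge 2\delta$ directly. If instead $X \in \C_c$ and $Y \in \C_{c'}$ with $c \ne c'$, then $v(X) = c$ and $v(Y) = c'$ are distinct codewords of the skeleton code ${\bf C}$, which has minimum Hamming distance $2\delta$, so $d_H(v(X), v(Y)) \ge 2\delta$; by Lemma~\ref{lem:dist_constant} we conclude $d_S(X,Y) \ge d_H(v(X),v(Y)) \ge 2\delta$. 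In either case the bound holds, so $d_S \ge 2\delta$ over all of $\C$.

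The main obstacle, such as it is, is the bookkeeping in the same-skeleton case: one must be careful that the lifting of $\cC_{\cF}$ into the non-leading columns of $EF(c)$ (with the padding to a $k \times (n-k)$ matrix having $\cF$ in the upper-right corner, as described before Lemma~\ref{lem:dist_lift}) genuinely preserves rank distances, so that the hypothesis of Lemma~\ref{lem:dist_lift} applies verbatim. Since the zero entries added in the padding and the forced structure of $EF(c)$ do not alter the rank of differences of codewords, this reduces to the cited consequence of \cite{SKK}, and no new estimate is needed. Combining the two cases with the dimension and disjointness observations yields that $\C$ is an $(n, M, 2\delta, k)_q$ constant dimension code with $M = \sum_{c \in {\bf C}} |\C_c|$.
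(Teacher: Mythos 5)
Your proof is correct and follows essentially the same route as the paper, which deduces the theorem directly from Lemma~\ref{lem:dist_constant} (for codewords lifted from distinct skeleton codewords) and Lemma~\ref{lem:dist_lift} (for codewords within the same lifted code $\C_c$). Your additional observations on disjointness of the union via the identifying vectors and on the padding preserving rank distance simply make explicit what the paper treats as immediate.
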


\vspace{0.2cm}
\begin{example}
\label{ex:n-6_k=3} Let $n=6$, $k=3$, and ${\bf C}= \{ 111000,~
100110,~ 010101,~ 001011 \}$ a constant weight code of length 6,
weight 3, and minimum Hamming distance 4. The echelon Ferrers
forms of these 4 codewords are

\begin{footnotesize}
\begin{align*}
EF(111000)=\left[ \begin{array}{cccccc}
1 & 0 & 0 & \bullet & \bullet & \bullet \\
0 & 1 & 0 & \bullet & \bullet & \bullet \\
0 & 0 & 1 & \bullet & \bullet & \bullet
\end{array}
\right]
\end{align*}
\end{footnotesize}

\begin{footnotesize}
\begin{align*}
EF(100110)=\left[ \begin{array}{cccccc}
1 & \bullet & \bullet & 0 & 0 & \bullet \\
0 & 0 & 0 & 1 & 0 & \bullet \\
0 & 0 & 0 & 0 & 1 & \bullet
\end{array}
\right]
\end{align*}
\end{footnotesize}

\begin{footnotesize}
\begin{align*}
EF(010101)=\left[ \begin{array}{cccccc}
0 & 1 & \bullet & 0 & \bullet & 0 \\
0 & 0 & 0 & 1 & \bullet & 0 \\
0 & 0 & 0 & 0 & 0 & 1
\end{array}
\right]
\end{align*}
\end{footnotesize}

\begin{footnotesize}
\begin{align*}
EF(001011)=\left[ \begin{array}{cccccc}
0 & 0 & 1 & \bullet & 0 & 0 \\
0 & 0 & 0 & 0 & 1 & 0 \\
0 & 0 & 0 & 0 & 0 & 1
\end{array}
\right] ~.
\end{align*}
\end{footnotesize}

By Theorem~\ref{thm:bound_attain}, the Ferrers diagrams of these four
echelon Ferrers forms yield Ferrers diagram rank-metric codes of
sizes 64, 4, 2, and 1, respectively. Hence, we obtain a
$(6,71,4,3)_2$ constant dimension code $\C$.

\end{example}

\vspace{0.6cm}

\begin{remark} A $(6,74,4,3)_2$ code was obtained by computer
search~\cite{KoKu08}. Similarly, we obtain a $(7,289,4,3)_2$ code.
A $(7,304,4,3)_2$ code was obtained by computer
search~\cite{KoKu08}.
\end{remark}

\vspace{0.2cm}
\begin{example}
\label{ex:c844} Let ${\bf C}$ be the codewords of weight 4 in the
[8,4,4] extended Hamming code with the following parity-check
matrix.

\begin{footnotesize}
\begin{align*}
\left[ \begin{array}{cccccccc}
0 & 0 & 0 & 0 & 1 & 1 & 1 & 1 \\
0 & 0 & 1 & 1 & 0 & 0 & 1 & 1 \\
0 & 1 & 0 & 1 & 0 & 1 & 0 & 1 \\
1 & 1 & 1 & 1 & 1 & 1 & 1 & 1
\end{array}
\right]
\end{align*}
\end{footnotesize}
${\bf C}$ has 14 codewords with weight 4. Each one of these
codewords is considered as an identifying vector for the echelon
Ferrers forms from which we construct the final $(8,4573,4,4)_2$
code $\C$. The fourteen codewords of ${\bf C}$ and their
contribution for the final code $\C$ are given in the following
table. The codewords are taken in lexicographic order.

\medskip
\begin{tabular}{|c|>{\centering}p{3cm}|>{\centering}p{3cm}|}
\hline
 & $c \in \cC$ & size of $\C_c$ \tabularnewline
\hline \hline 1 & 11110000 & 4096\tabularnewline \hline 2 &
11001100 & 256\tabularnewline \hline 3 & 11000011 &
16\tabularnewline \hline 4 & 10101010 & 64\tabularnewline \hline 5
& 10100101 & 16\tabularnewline \hline 6 & 10011001 &
16\tabularnewline \hline 7 & 10010110 & 16\tabularnewline \hline 8
& 01101001 & 32\tabularnewline \hline 9 & 01100110 &
16\tabularnewline \hline 10 & 01011010 & 16\tabularnewline \hline
11 & 01010101 & 8\tabularnewline \hline 12 & 00111100 &
16\tabularnewline \hline 13 & 00110011 & 4\tabularnewline \hline
14 & 00001111 & 1\tabularnewline \hline
\end{tabular}
\end{example}
\vspace{0.6cm}

\subsection{Code parameters}
\label{sec:parameters}

We now want to discuss the size of our constant dimension code,
the required choice for the skeleton code ${\bf C}$, and compare the size
of our codes with the size of the codes constructed
in~\cite{KK,SKK}.

The size of the final constant dimension code $\C$ depends on the
choice of the skeleton code ${\bf C}$. The identifying vector with
the largest size of corresponding rank-metric code is
$\underset{k}{\underbrace{1 \cdots 1}}\underset{n-k}{\underbrace{0
\cdots 0}}$. The corresponding $[k \times (n-k),\ell,\delta]$
rank-metric code has dimension $\ell=(n-k)(k-\delta+1)$ and hence
it contributes $q^{(n-k)(k-\delta+1)}$ $k$-dimensional subspaces
to our final code $\C$. These subspaces form the codes
in~\cite{KK,SKK}. The next identifying vector which contributes
the most number of subspaces to $\C$ is
$\underset{k-\delta}{\underbrace{11...1}}\underset{\delta}
{\underbrace{0 \cdots 0}}\underset{\delta}{\underbrace{11...1}}
\underset{n-k-\delta}{\underbrace{000...00}}$. The number of
subspaces it contributes depends on the bounds presented in
Section~\ref{sec:partial}. The rest of the code $\C$ usually has
less codewords from those contributed by these two. Therefore, the
improvement in the size of the code compared to the code
of~\cite{KK} is not dramatic. But, for most parameters our codes
are larger than the best known codes. In some cases, e.g. when
$\delta=k$ our codes are as good as the best known codes
(see~\cite{EV}) and suggest an alternative construction.  When
$k=3$, $\delta=4$, and reasonably small $n$, the cyclic codes
constructed in~\cite{EV,KoKu08} are larger.

Two possible alternatives for the best choice for the skeleton
code ${\bf C}$ might be of special interest. The first one is for
$k=4$ and $n$ which is a power of two. We conjecture that the best
skeleton code is constructed from the codewords with weight 4 of
the extended Hamming code for which the columns of the
parity-check matrix are given in lexicographic order. We
generalize this choice of codewords from the Hamming code by
choosing a constant weight lexicode~\cite{CoSl86}. Such a code is
constructed as follows. All vectors of length $n$ and weight $k$
are listed in lexicographic order. The code ${\bf C}$ is generated
by adding to the code ${\bf C}$ one codeword at a time. At each
stage, the first codeword of the list that does not violate the
distance constraint with the other codewords of ${\bf C}$, is
joined to ${\bf C}$. Lexicodes are not necessarily the best
constant weight codes. For example, the largest constant weight
code of length 10 and weight 4 is 30, while the lexicode with the
same parameters has size 18. But, the constant dimension code
derived from the lexicode is larger than any constant dimension
code derived from any related code of size 30.

The following table summarized the sizes of some of our codes
compared to previous known codes. In all these codes we have
started with a constant weight lexicode in the first step of the
construction.

\vspace{0.2cm}
\begin{tabular}{|c|c|c|c|c|c|}
\hline $q$ & $d_S (\C)$ & $n$ & $k$ & code size\cite{KK} & size of
our code\tabularnewline \hline \hline 2 & 4 & 9 & 4 & $2^{15}$ &
$2^{15}$+4177\tabularnewline \hline 2 & 4 & 10 & 5 & $2^{20}$ &
$2^{20}$+118751\tabularnewline \hline 2 & 4 & 12 & 4 & $2^{24}$ &
$2^{24}$+2290845\tabularnewline \hline 2 & 6 & 10 & 5 & $2^{15}$ &
$2^{15}$+73\tabularnewline \hline
2 & 6 & 13 & 4 & $2^{18}$ & $2^{18}$+4357\tabularnewline \hline 2
& 8 & 21 & 5 & $2^{32}$ & $2^{32}$+16844809 \tabularnewline \hline
3 & 4 & 7 & 3 & $3^{8}$ & $3^{8}$+124\tabularnewline \hline 3 & 4
& 8 & 4 & $3^{12}$ & $3^{12}$+8137\tabularnewline \hline 4 & 4 & 7
& 3 & $4^{8}$ & $4^{8}$+345\tabularnewline \hline 4 & 4 & 8 & 4 &
$4^{12}$ & $4^{12}$+72529\tabularnewline \hline
\end{tabular}
\vspace{0.1cm}


\subsection{Decoding}

The decoding of our codes is quite straightforward and it mainly
consists of known decoding algorithms. As we used a multilevel
coding we will also need a multilevel decoding. In the first step
we will use a decoding for our skeleton code and in the
second step we will use a decoding for the rank-metric codes.

Assume the received word was a $k$-dimensional subspace $Y$. We
start by generating its reduced row echelon form $E(Y)$. Given
$E(Y)$ it is straightforward to find the identifying vector
$v(Y)$. Now, we use the decoding algorithm for the constant weight
code to find the identifying vector $v(X)$ of the submitted
$k$-dimensional subspace $X$. If no more than $\delta -1$ errors
occurred then we will find the correct identifying vector. This
claim is an immediate consequence of
Lemma~\ref{lem:dist_constant}.

In the second step of the decoding we are given the received
subspace $Y$, its identifying vector $v(Y)$, and the identifying
vector $v(X)$ of the submitted subspace $X$. We consider the
echelon Ferrers form $EF(v(X))$, its Ferrers diagram $\cF$, and
the $[\cF, \varrho, \delta ]$ Ferrers diagram rank-metric code
associated with it. We can permute the columns of $EF(v(X))$, and
use the same permutation on $Y$, in a way that the identity matrix
$I_k$ will be in the left side. Now, we can use the decoding of
the specific rank-metric code. If our rank-metric codes are those
constructed in Section~\ref{sec:partial} then we can use the
decoding as described in~\cite{SKK}. It is clear now that the
efficiency of our decoding depends on the efficiency of the
decoding of our skeleton code and the efficiency of the decoding
of our rank-metric codes. If the rank-metric codes are MRD codes
then they can be decoded efficiently~\cite{Gab85,Rot91}. The same is
true if the Ferrers diagram metric codes are those constructed in
Section~\ref{sec:partial} as they are subcodes of MRD codes and the
decoding algorithm of the related MRD code applied to them too.

There are some alternative ways for our decoding, some of which
improve on the complexity of the decoding. For example we can make
use of the fact that most of the code is derived from two
identifying vectors or that most of the rank-metric codes are of
relatively small size. One such case can be when all the identity
matrices of the echelon Ferrers forms are in consecutive columns
of the codeword (see~\cite{Ska08}). We will not discuss it as the
related codes hardly improve on the codes in~\cite{KK,SKK}.

Finally, if we allow to receive a word which is an
$\ell$-dimensional subspace $Y$, $k-\delta+1 \leq \ell \leq
k+\delta-1$, then the same procedure will work as long as $d_S
(X,Y) \leq \delta -1$. This is a consequence of the fact that the
decoding algorithm of~\cite{SKK} does not restrict the dimension
of the received word.

\section{Error-Correcting Projective Space Codes}
\label{sec:eccProj}

In this section our goal will be to construct large codes in $\Ps$
which are not constant dimension codes. We first note that the
multilevel coding described in Section~\ref{sec:eccGrass} can be
used to obtain a code in $\Ps$. The only difference is that we
should start in the first step with a general binary code of
length $n$ in the Hamming space as a skeleton code. The first question which will
arise in this context is whether the method is as good as for
constructing codes in $\Gr$. The answer can be inferred from the
following example.

\vspace{0.2cm}
\begin{example}
\label{ex:c394} Let $n=7$ and $d=3$, and consider the [7,4,3]
Hamming code with the parity-check matrix

\begin{footnotesize}
\begin{align*}
\left[ \begin{array}{ccccccc}
0 & 0 & 0 & 1 & 1 & 1 & 1 \\
0 & 1 & 1 & 0 & 1 & 1 & 0 \\
1 & 0 & 1 & 1 & 0 & 1 & 0
\end{array}
\right] .
\end{align*}
\end{footnotesize}

By using the multilevel coding with this Hamming code we obtain a
code with minimum distance 3 and size 394 in
$\smash{{\sP\kern-2.0pt}_2\kern-0.5pt(7)}$.
\end{example}
\vspace{0.6cm}

As we shall see in the sequel this code is much smaller than a
code that will be obtained by puncturing. We have also generated
codes in the projective space based on the multilevel
construction, where the skeleton code is a lexicode. The
constructed codes appear to be much smaller than the codes
obtained by puncturing. Puncturing of a code $\C$ (or union of
codes with different dimensions and the required minimum distance)
obtained in Section~\ref{sec:eccGrass} results in a projective
space code $\C'$. If the minimum distance of $\C$ is $2 \delta$
then the minimum distance of $\C'$ is $2 \delta -1$. $\C'$ has a
similar structure to a code obtained by the multilevel
construction (similar structure in the sense that the identifying
vectors of the codewords can form a skeleton code). But the
artificial "skeleton code" can be partitioned into pair of
codewords with Hamming distance one, while the distance between
two codewords from different pairs is at least $2 \delta -1$. This
property yields larger codes by puncturing, sometimes with double
size, compared to codes obtained by the multilevel construction.

\subsection{Punctured codes}
\label{sec:punct}

Puncturing and punctured codes are well known in the Hamming
space. An $(n,M,d)$ code in the Hamming space is a code of length
$n$, minimum Hamming distance $d$, and $M$ codewords. Let ${\bf
C}$ be an $(n,M,d)$ code in the Hamming space. Its punctured code
${\bf C}'$ is obtained by deleting one coordinate of ${\bf C}$.
Hence, there are $n$ punctured codes and each one is an
$(n-1,M,d-1)$ code. In the projective space there is a very large
number of punctured codes for a given code $\C$ and in contrary to
the Hamming space the sizes of these codes are usually different.

Let $X$ be an $\ell$-subspace of $\F_q^n$ such that the unity
vector with an {\it one} in the $i$-th coordinate is not an element
in $X$. The {\it $i$-coordinate puncturing} of $X$, $\Delta_i(X)$,
is defined as the $\ell$-dimensional subspace of $\F_q^{n-1}$
obtained from $X$ by deleting coordinate $i$ from each vector in
$X$. This puncturing of a subspace is akin to puncturing a code
${\bf C}$ in the Hamming space by the $i$-th coordinate.

Let $\C$ be a code in $\Ps$ and let $Q$ be an $(n-1)$-dimensional
subspace of $\F_q^n$. Let $E(Q)$ be the $(n-1) \times n$ generator
matrix of $Q$ (in reduced row echelon form) and let $\tau$ be the
position of the unique {\it zero} in $v(Q)$. Let $v \in \F_q^n$ be an
element such that $v \notin Q$. We define the {\it punctured} code

$$\C'_{Q,v} = \C_Q \cup \C_{Q,v}~,$$
where
$$\C_Q=\left\{ \Delta_\tau (X) ~:~ X \in \C, ~ X \subseteq Q\right\})$$
and
$$\C_{Q,v}= \left\{ \Delta_\tau (X \cap Q)~:\: X \in \C,~v\in X \right\}~.$$

\begin{remark} If $\C$ was constructed by the multilevel construction of
Section~\ref{sec:eccGrass} then the codewords of $\C_Q$ and
$\C_{Q,v}$ can be partitioned into related lifted codes of Ferrers
diagram rank-metric codes. Some of these codes are cosets of the
linear Ferrers diagram rank-metric codes.
\end{remark}

The following theorem can be easily verified.

\begin{theorem}
The punctured code $\C'_{Q,v}$ of an $(n,M,d)_q$ code $\C$ is an
$(n-1,M',d-1)_q$ code.
\end{theorem}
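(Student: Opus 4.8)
The plan is to show two things about $\C'_{Q,v} = \C_Q \cup \C_{Q,v}$: that its size is $M' = |\C|$ (i.e., that the map from $\C$ to $\C'_{Q,v}$ is well-defined and injective), and that its minimum distance is at least $d-1$. Throughout, $\tau$ is the position of the unique zero of $v(Q)$, and the key elementary fact is that $i$-coordinate puncturing $\Delta_\tau$ behaves well on subspaces not containing the unit vector $e_\tau$: it preserves dimension, and more importantly it is injective on such subspaces and on their intersections with $Q$. First I would record the basic observation that if $X \subseteq Q$ then $e_\tau \notin X$ (since $e_\tau \notin Q$, as the zero of $v(Q)$ sits in column $\tau$ of $E(Q)$), so $\Delta_\tau(X)$ is a well-defined $\ell$-dimensional subspace of $\F_q^{n-1}$; and if $v \in X$ then $X \cap Q$ is a subspace of $Q$ not containing $e_\tau$, so $\Delta_\tau(X \cap Q)$ is well-defined with dimension $\dim X - 1$ (since $v \notin Q$ forces $X \cap Q$ to be a proper hyperplane of $X$).

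For the size, I would argue injectivity of $X \mapsto \Delta_\tau(X)$ on $\{X \in \C : X \subseteq Q\}$ and of $X \mapsto \Delta_\tau(X \cap Q)$ on $\{X \in \C : v \in X\}$ separately, then show the two image sets are disjoint. Within the first family, $\Delta_\tau$ is injective because from $\Delta_\tau(X)$ one recovers $X$ uniquely: $X$ is the unique subspace of $Q$ that projects to $\Delta_\tau(X)$ under deletion of coordinate $\tau$ (this uses $e_\tau \notin Q$, so the projection $Q \to \F_q^{n-1}$ is injective). Within the second family, given $X$ with $v \in X$, the subspace $X$ is recovered as $\Span{X \cap Q, v}$, and $X \cap Q$ is recovered from $\Delta_\tau(X \cap Q)$ by the same projection argument, so two codewords of $\C$ containing $v$ with the same image must coincide. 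For cross-injectivity: a codeword in $\C_Q$ has the form $\Delta_\tau(X)$ with $\dim = \dim X$, whereas lifting back a codeword in $\C_{Q,v}$ gives $X \cap Q$ then $X = \Span{X\cap Q, v}$; if $\Delta_\tau(X) = \Delta_\tau(X' \cap Q)$ with $X \subseteq Q$ and $v \in X'$, then $X = X' \cap Q \subseteq Q$, but also $v \in X'$ and $X' = \Span{X' \cap Q, v} = \Span{X, v}$; if $v \in X$ this contradicts $X \subseteq Q$ (since $v \notin Q$), and if $v \notin X$ then $X' \cap Q = X$ has dimension $\dim X' - 1$, while the image dimensions $\dim X$ and $\dim X' - 1$ agree — so we instead distinguish them by noting these particular $X'$ do lie in $\C_Q$ already only if $X' \subseteq Q$, which fails since $v \in X'$, $v\notin Q$; so in fact the same subspace of $\F_q^{n-1}$ can be produced by both families but corresponds to the \emph{same} original codeword only if that codeword both is contained in $Q$ and contains $v$, impossible. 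Hence the total multiplicity is one per codeword of $\C$ and $M' = M$. (Some care is needed here: it is cleanest to check directly that the assignment $\C \to \C'_{Q,v}$ sending $X \mapsto \Delta_\tau(X)$ when $X \subseteq Q$ and $X \mapsto \Delta_\tau(X \cap Q)$ when $v \in X$, and arbitrary otherwise, is a bijection onto $\C'_{Q,v}$; codewords of $\C$ with $X \not\subseteq Q$ and $v \notin X$ simply do not contribute.)

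For the distance, let $Z_1, Z_2$ be distinct codewords of $\C'_{Q,v}$ arising from distinct $X_1, X_2 \in \C$. The core inequality I would use is that $\Delta_\tau$ is $1$-Lipschitz up to the unavoidable loss: for subspaces $U, W$ of $Q$ (hence avoiding $e_\tau$), $d_S(\Delta_\tau U, \Delta_\tau W) = d_S(U,W)$ because the projection $Q \to \F_q^{n-1}$ is a linear isomorphism onto its image and isometries of the ambient space induce isometries of $\Ps$ restricted to that image. The remaining cases involve one or both $X_i$ containing $v$ but not lying in $Q$; here I would bound $d_S(X_i, X_i \cap Q) = 1$ (a subspace and a hyperplane in it are at subspace-distance $1$) and combine with the triangle inequality for $d_S$ on $\Ps$: $d_S(Z_1, Z_2) \ge d_S(X_1, X_2) - d_S(X_1, X_1') - d_S(X_2, X_2')$, where $X_i' = X_i$ if $X_i \subseteq Q$ and $X_i' = X_i \cap Q$ otherwise, so each correction term is $0$ or $1$. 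The worst case is when both corrections are $1$, giving $d_S(Z_1,Z_2) \ge d - 2$; to get $d-1$ I would observe that when both $X_1,X_2$ contain $v$, the subspaces $X_1 \cap Q$ and $X_2 \cap Q$ both lie in $Q$ and both have codimension $1$ in $X_1, X_2$ respectively containing the same vector $v$, so a sharper bound $d_S(\Delta_\tau(X_1\cap Q), \Delta_\tau(X_2 \cap Q)) = d_S(X_1 \cap Q, X_2 \cap Q) \ge d_S(X_1,X_2) - 1 \ge d-1$ holds (losing only $1$, not $2$, because intersecting both with the common hyperplane-condition removes at most one dimension of discrepancy).

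The main obstacle I expect is precisely this last estimate — showing the loss is exactly one even in the mixed case where one codeword comes from $\C_Q$ and the other from $\C_{Q,v}$. The clean way is a dimension count: if $X_1 \subseteq Q$ and $v \in X_2$, write $W_2 = X_2 \cap Q$, so $\dim W_2 = \dim X_2 - 1$ and $X_1, W_2 \subseteq Q$; then $d_S(\Delta_\tau X_1, \Delta_\tau W_2) = d_S(X_1, W_2) = \dim X_1 + \dim X_2 - 1 - 2\dim(X_1 \cap W_2)$, and since $X_1 \cap W_2 = X_1 \cap X_2 \cap Q \supseteq$ a subspace of codimension at most $1$ in $X_1 \cap X_2$, we get $\dim(X_1 \cap W_2) \le \dim(X_1 \cap X_2)$, hence $d_S(X_1, W_2) \ge d_S(X_1, X_2) - 1 \ge d - 1$. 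The same bookkeeping handles the case $v \in X_1$ and $v \in X_2$. I would present these as a short sequence of inequalities, citing only that $d_S$ is a metric (stated in the excerpt) and the trivial dimension facts above, and conclude $\C'_{Q,v}$ is an $(n-1, M, d-1)_q$ code.
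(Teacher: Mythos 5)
The paper itself offers no written proof of this theorem (it only says it ``can be easily verified''), so your argument has to stand on its own. Its substantive part --- the minimum-distance bound --- is correct and is surely the intended verification: because the unique zero of $v(Q)$ is in position $\tau$, every other column of $E(Q)$ is a unit column, so deleting coordinate $\tau$ is a linear isomorphism from $Q$ onto $\F_q^{n-1}$; hence $\Delta_\tau$ preserves $d_S$ between subspaces of $Q$, and everything reduces to comparing $X_1$ or $X_1\cap Q$ with $X_2$ or $X_2\cap Q$ inside $Q$. Your dimension counts then do the work correctly: if $X_1\subseteq Q$ and $v\in X_2$, then $X_1\cap (X_2\cap Q)=X_1\cap X_2$ and $\dim(X_2\cap Q)=\dim X_2-1$, so the distance drops by exactly one, giving $\ge d-1$; if $v\in X_1\cap X_2$, then moreover $\dim(X_1\cap X_2\cap Q)=\dim(X_1\cap X_2)-1$, so in fact no distance is lost at all; and two codewords contained in $Q$ keep distance $\ge d$. (Your first pass via the triangle inequality, which only yields $d-2$, is superseded by these counts and should be dropped. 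You also implicitly use, correctly, that each $X\in\C$ produces at most one punctured word, since $v\notin Q$ makes $X\subseteq Q$ and $v\in X$ mutually exclusive, so distinct codewords of $\C'_{Q,v}$ do come from distinct codewords of $\C$.)

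What is wrong is the size claim: $M'=M$ is false in general. Codewords $X\in\C$ with $X\not\subseteq Q$ and $v\notin X$ contribute nothing to $\C'_{Q,v}$ --- you concede exactly this in your parenthetical, which contradicts both the bijection claim and your closing assertion that the punctured code is an $(n-1,M,d-1)_q$ code. Concretely, the paper's $(8,4573,4,4)_2$ code of Example~\ref{ex:c844} punctures, in Example~\ref{ex:c573}, to a code with only $573$ codewords. The theorem is unharmed because it asserts nothing about $M'$ beyond $M'=|\C'_{Q,v}|$ (the following theorem in the paper is precisely a lower bound on $M'$), so you should delete the claim $M'=M$ and the rather muddled ``cross-injectivity'' discussion, none of which is needed; the isometry-plus-dimension-count argument alone proves the statement.
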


\begin{remark} The code $\tilde{\C} = \left\{ X ~:~ X \in \C, ~ X \subseteq
Q\right\}) \cup \left\{ X \cap Q~:\: X \in \C,~v\in X \right\}$ is
an $(n,M',d-1)_q$ code whose codewords are contained in $Q$. Since
$Q$ is an $(n-1)$-dimensional subspace it follows that there is an
isomorphism $\varphi$ such that $\varphi(Q) = \F_q^{n-1}$. The
code $\varphi (\tilde{\C}) = \{ \varphi (X) ~:~ X \in \tilde{\C}
\}$ is an $(n-1,M',d-1)_q$ code. The code $\C'_{Q,v}$ was obtained
from $\tilde{\C}$ by such isomorphism which uses the
$\tau$-coordinate puncturing on all the vectors of $Q$.
\end{remark}

\begin{example}
\label{ex:c573} Let $\C$ be the $(8,4573,4,4)_2$ code given in
Example~\ref{ex:c844}. Let $Q$ be the 7-dimensional subspace whose
$7 \times 8$ generator matrix is

\[
\left(\begin{array}{cccccc}
1 & 0 & \ldots & 0 & 0\\
0 & 1 & \ldots & 0 & 0\\
\vdots & \vdots & \ddots & \vdots & \vdots\\
0 & 0 & \ldots & 1 & 0\end{array}\right).\]

By using puncturing with $Q$ and $v=1000001$ we obtain a code
$\C'_{Q,v}$ with minimum distance 3 and size 573.  By adding to
$\C'_{Q,v}$ two codewords, the null space $\{ 0 \}$ and $\F_2^7$
we obtained a $(7,575,3)_2$ code in $\mathcal{P}_{2}(7)$. The
following tables show the number of codewords which were obtained
from each of the identifying vectors with weight 4 of
Example~\ref{ex:c844}.

\vspace{0.2cm}
\begin{tabular}{|>{\centering}p{1.1in}|>{\centering}p{1.1in}|}
\hline \multicolumn{1}{|c}{$\mathbb{C}_{Q}$} &
\multicolumn{1}{c|}{}\tabularnewline \hline \hline identifying
vector & addition to $\mathbb{C}_{Q}$ \tabularnewline \hline
11110000 & 256\tabularnewline \hline 11001100 & 16\tabularnewline
\hline 10101010 & 8\tabularnewline \hline 10010110 &
2\tabularnewline \hline 01100110 & 4\tabularnewline \hline
01011010 & 2\tabularnewline \hline 00111100 & 1\tabularnewline
\hline
\end{tabular}

\smallskip{}

\begin{tabular}{|>{\centering}p{1.1in}|>{\centering}p{1.1in}|}
\hline \multicolumn{1}{|c}{$\mathbb{C}_{Q,v}$ } &
\multicolumn{1}{c|}{$v=10000001$}\tabularnewline \hline \hline
identifying vector & \multicolumn{1}{p{1.1in}|}{addition to
$\mathbb{C}_{Q,v}$}\tabularnewline \hline 11110000 &
256\tabularnewline \hline 11001100 & 16\tabularnewline \hline
11000011 & 1\tabularnewline \hline 10101010 & 4\tabularnewline
\hline 10100101 & 2\tabularnewline \hline 10011001 &
4\tabularnewline \hline 10010110 & 1\tabularnewline \hline
\end{tabular}
\vspace{0.3cm}

\noindent The Ferrers diagram rank-metric codes in some of the
entries must be chosen (if we want to construct the same code by
a multilevel construction) in a clever way and not directly as given
in Section~\ref{sec:partial}. We omit their description from lack
of space and leave it to the interested reader.

\end{example}
\vspace{0.6cm}

The large difference between the sizes of the codes of
Examples~\ref{ex:c394} and~\ref{ex:c573} shows the strength of
puncturing when applied on codes in $\Ps$.

\subsection{Code parameters}

First we ask, what is the number of codes which can be defined in
this way from $\C$? $Q$ is an $(n-1)$-dimensional subspace of
$\F_q^n$ and hence it can be chosen in $\frac{q^n-1}{q-1}$
different ways. There are $\frac{q^n-q^{n-1}}{q-1}=q^{n-1}$
distinct way to choose $v \notin Q$ after $Q$ was chosen. Thus, we
have that usually puncturing of a code $\C$ in $\Ps$ will result
in $\frac{q^{2n-1}-q^{n-1}}{q-1}$ different punctured codes.

\begin{theorem}
If $\C$ is an $(n,M,d,k)_q$ code then there exists an
$(n-1,M',d-1)_q$ code $\C'_{Q,v}$ such that $M' \geq
\frac{M(q^{n-k}+q^k-2)}{q^n-1}$.
\end{theorem}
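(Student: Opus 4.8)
The plan is an averaging argument over all admissible pairs $(Q,v)$. For a fixed $(n,M,d,k)_q$ constant dimension code $\C$ we count, summing over all choices of an $(n-1)$-dimensional subspace $Q$ and every $v \notin Q$, how many codewords of $\C$ survive into the punctured code $\C'_{Q,v}$. A codeword $X \in \C$ contributes to $\C_Q$ whenever $X \subseteq Q$, and it contributes to $\C_{Q,v}$ whenever $v \in X$ (in which case $X \cap Q$ is an $(k-1)$-dimensional subspace of $Q$, and puncturing by the $\tau$-coordinate yields a codeword of $\C'_{Q,v}$). First I would argue that these two ways of contributing are essentially disjoint up to lower-order terms, or more simply that the total contribution $|\C'_{Q,v}| \ge |\{X \in \C : X \subseteq Q\}|$ and separately $\ge |\{X \in \C : v \in X\}|$, so it suffices to lower-bound the sum of either count and then pass to a maximum.

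The key computation is a double count. For the $\C_Q$ part: for each fixed $X \in \C$ (a $k$-subspace), the number of $(n-1)$-dimensional subspaces $Q$ with $X \subseteq Q$ equals the number of hyperplanes of $\F_q^n$ containing a fixed $k$-subspace, which is $\frac{q^{n-k}-1}{q-1}$; and for each such $Q$ there are $q^{n-1}$ choices of $v \notin Q$. For the $\C_{Q,v}$ part: for each fixed $X \in \C$, the number of pairs $(Q,v)$ with $v \in X$ and $v \notin Q$ is (number of nonzero $v \in X$ up to scalars, i.e. $\frac{q^k-1}{q-1}$) times (number of hyperplanes $Q$ not containing that $v$, which is $q^{n-1}$, since exactly $\frac{q^{n-1}-1}{q-1}$ of the $\frac{q^n-1}{q-1}$ hyperplanes contain a given point). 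Hence
$$
\sum_{(Q,v)} |\C'_{Q,v}| \;\ge\; \sum_{X \in \C}\left( q^{n-1}\cdot\frac{q^{n-k}-1}{q-1} + q^{n-1}\cdot\frac{q^k-1}{q-1}\right) \;=\; M\, q^{n-1}\,\frac{q^{n-k}+q^k-2}{q-1}.
$$
Dividing by the total number $\frac{q^{2n-1}-q^{n-1}}{q-1} = q^{n-1}\,\frac{q^n-1}{q-1}$ of pairs $(Q,v)$ computed in the preceding subsection, the average value of $|\C'_{Q,v}|$ is at least $\frac{M(q^{n-k}+q^k-2)}{q^n-1}$, so some pair $(Q,v)$ achieves at least this value, giving the claimed $\C'_{Q,v}$.

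The step I expect to require the most care is the claim that the $\C_Q$ and $\C_{Q,v}$ contributions may be added without overcounting: a codeword $X$ with $X\subseteq Q$ cannot also satisfy $v\in X$ (since $v\notin Q\supseteq X$), so for a fixed $(Q,v)$ no single $X\in\C$ is counted in both families; what still needs a line of justification is that distinct codewords $X_1\neq X_2$ of $\C$ map to distinct codewords of $\C'_{Q,v}$ — this follows because $d\ge 2$ forces $X_1\cap Q \ne X_2\cap Q$ (their intersections with $Q$ differ whenever the relevant puncturing is injective on the subspaces involved), together with the fact that $\tau$-coordinate puncturing is a bijection on subspaces of $Q$. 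Once injectivity is in hand, the averaging bound above is immediate, and the minimum distance $d-1$ of $\C'_{Q,v}$ is exactly the Theorem proved earlier in this subsection. I would also remark that the bound is tight in natural regimes (e.g. the all-identity skeleton vector), mirroring the discussion in Section~\ref{sec:parameters}.
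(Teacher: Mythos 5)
Your proof is correct and is essentially the paper's argument: it uses the same decomposition $\C'_{Q,v}=\C_Q\cup\C_{Q,v}$ and the same counts (a $k$-subspace lies in $\frac{q^{n-k}-1}{q-1}$ hyperplanes, and $q^{n-1}$ hyperplanes avoid a given nonzero $v$), the only difference being that you average once over all pairs $(Q,v)$, whereas the paper averages sequentially — first over $Q$ to get $|\C_Q|\geq M\frac{q^{n-k}-1}{q^n-1}$, then over $v\notin Q$ for that fixed $Q$ — and combines the two bounds algebraically to reach the same expression. The injectivity/disjointness point you flag needs no appeal to $d\ge 2$: for $v\in X\setminus Q$ one has $X=(X\cap Q)\oplus\Span{v}$, so $X\cap Q$ determines $X$, and the two families contribute subspaces of different dimensions ($k$ versus $k-1$), a point the paper likewise leaves implicit.
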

\begin{proof}
As before, $Q$ can be chosen in $\frac{q^n-1}{q-1}$ different
ways. By using basic enumeration, it is easy to verify that each
$k$-dimensional subspace of $\Ps$ is contained in
$\frac{q^{n-k}-1}{q-1}$~~$(n-1)$-dimensional subspaces of $\Ps$.
Thus, by a simple averaging argument we have that there exists an
$(n-1)$-dimensional subspace $Q$ such that $|\C_Q|\geq
M\frac{q^{n-k}-1}{q^n-1}$.

There are $M- |\C_Q|$ codewords in $\C$ which are not contained in
$Q$. For each such codeword $X \in \C$ we have $\dim (X \cap Q)=k-1$.
Therefore, $X$ contains $q^k-q^{k-1}$ vectors which do not belong
to $Q$. In $\F_q^n$ there are $q^n-q^{n-1}$ vectors which do not
belong to $Q$. Thus, again by using simple averaging argument we
have that there exist an $(n-1)$-dimensional subspace $Q \subset
\F_q^n$ and $v\notin Q$ such that $|\C_{Q,v}|\geq
\frac{(M-|\C_Q|)(q^k-q^{k-1})}{q^n-q^{n-1}}=\frac{M-|\C_Q|}{q^{n-k}}$.

Therefore, there exists an $(n-1,M',d-1)_q$ code $\C'_{Q,v}$ such
that $M'=|\C_Q|+|\C_{Q,v}|\geq
\frac{|\C_Q|q^{n-k}+M-|\C_Q|}{q^{n-k}}=
\frac{(q^{n-k}-1)|\C_Q|+M}{q^{n-k}} \geq
\frac{(q^{n-k}-1)M(q^{n-k}-1)+M(q^n-1)}{(q^n-1)q^{n-k}}=
\frac{M(q^{n-k}+q^k-2)}{q^n-1}$.
\end{proof}
\vspace{0.2cm}

Clearly, choosing the $(n-1)$-dimensional subspace $Q$ and the
element $v$ in a way that $\C'_{Q,v}$ will be maximized is
important in this context. Example~\ref{ex:c573} can be
generalized in a very simple way. We start with a
$(4k,q^{2k(k+1)},2k,2k)_q$ code obtained from the codeword
$\underset{2k}{\underbrace{1 \cdots 1}}\underset{2k}{\underbrace{0
\cdots 0}}$ in the multilevel approach. We apply puncturing with
the $(n-1)$-dimensional subspace $Q$ whose $(n-1) \times n$
generator matrix is

\[
\left(\begin{array}{cccccc}
1 & 0 & \ldots & 0 & 0\\
0 & 1 & \ldots & 0 & 0\\
\vdots & \vdots & \ddots & \vdots & \vdots\\
0 & 0 & \ldots & 1 & 0\end{array}\right).\]

It is not difficult to show that in the $[(2k) \times (2k) ,
2k(k+1), k]$ rank-metric code $\cC$ there are $q^{2k^2}$ codewords
with {\it zeroes} in the last column and $q^{2k^2}$ codewords with
{\it zeroes} in the first row. There is also a codeword whose
first row ends with a {\it one}. If $u$ is this first row which
ends with a {\it one} there are $q^{2k^2}$ codewords whose first
row is $u$. We choose $v$ to be $v=1\underset{2k-1}{\underbrace{0
\cdots 0}}u$. By using puncturing with $Q$ and $v$ we have
$|\C_Q|=q^{2k^2}$ and $|\C_{Q,v}|=q^{2k^2}$. Hence, $\C'_{Q,v}$ is
a $(4k-1,2q^{2k^2},2k-1)_q$ code in $\mathcal{P}_{q}(4k-1)$. By
using more codewords from the constant weight code in the
multilevel approach and adding the null space and $\F_q^{4k-1}$ to
the code we construct a slightly larger code with the same
parameters.

\subsection{Decoding}

We assume that $\C$ is an $(n,M,d)_q$ code and that all the
dimensions of the subspaces in $\C$ have the same parity which
implies that $d=2 \delta$. This assumption makes sense as these
are the interesting codes on which puncturing is applied,
similarly to puncturing in the Hamming space. We further assume
for simplicity that w.l.o.g. if $E(Q)$ is the $(n-1) \times n$
generator matrix of $Q$ then the first $n-1$ columns are linearly
independent, i.e., $E(Q)= [ I ~ u ]$, where $I$ is an $(n-1)
\times (n-1)$ unity matrix and $u$ is a column vector of length
$n-1$.

Assume that the received word from a codeword $X'$ of $\C'_{Q,v}$
is an $\ell$-dimensional subspace $Y'$ of $\F_q^{n-1}$. The first
step will be to find a subspace $Z$ of $\F_q^n$ on which we can apply
the decoding algorithm of $\C$. The result of this decoding will reduced
to the $(n-1)$-dimensional subspace $Q$ and punctured
to obtain the codeword of $\C'_{Q,v}$. We start
by generating from $Y'$ an $\ell$-dimensional subspace $Y \subset
Q$ of $\F_q^n$. This is done by appending a symbol to the end of
each vector in $Y'$ by using the generator matrix $E(Q)$ of $Q$.
If a generator matrix $E(Y')$ is given we can do this process only
to the rows of $E(Y')$ to obtain the generator matrix $E(Y)$ of
$Y$. We leave to the reader the verification that the generator
matrix of $Y$ is formed in its reduced row echelon form.

\begin{remark} If the {\it zero} of $v(Q)$ is in coordinate $\tau$ then
instead of appending a symbol to the end of the codeword we insert
a symbol at position $\tau$.
\end{remark}

Let $\ell$ be the dimension of $Y'$ and assume $p$ is the parity
of the dimension of any subspace in $\C$, where $p=0$ or $p=1$.
Once we have $Y$ we distinguish between two cases to form a new
subspace $Z$ of $\F_q^n$.

\noindent {\bf Case 1:} $\delta$ is even.
\begin{itemize}
\item If $\ell \equiv p~(mod~2)$ then $Z= Y \cup (v + Y)$.

\item If $\ell \not\equiv p~(mod~2)$ then $Z=Y$.
\end{itemize}

\noindent {\bf Case 2:} $\delta$ is odd.
\begin{itemize}
\item If $\ell \equiv p~(mod~2)$ then $Z= Y$.

\item If $\ell \not\equiv p~(mod~2)$ then $Z= Y \cup (v + Y)$.
\end{itemize}

Now we use the decoding algorithm of the code $\C$ with the word
$Z$. The algorithm will produce as an output a codeword $\cX$. Let
$\tilde{X} = \cX \cap Q$ and $\tilde{X}'$ be the subspace of
$\F_q^{n-1}$ obtained from $\tilde{X}$ by deleting the last entry
of $\tilde{X}$. We output $\tilde{X}'$ as the submitted codeword
$X'$ of $\C'_{Q,v}$. The correctness of the decoding algorithm is
an immediate consequence from the following theorem.

\begin{theorem}
If $d_S (X',Y') \leq \delta-1$ then $\tilde{X}'=X'$.
\end{theorem}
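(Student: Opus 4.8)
The plan is to track the distance through each of the two transformations — the lifting/extension $Y' \mapsto Z$ and the subsequent restriction $\cX \mapsto \tilde X \mapsto \tilde X'$ — and to show that the decoding hypothesis $d_S(X',Y') \le \delta - 1$ guarantees that the word $Z$ fed to the decoder of $\C$ lies within distance $2\delta - 1 < d = 2\delta$, wait, we actually need it within the decoding radius; since $d = 2\delta$, the decoder of $\C$ corrects up to $\delta - 1$ errors, hmm, but $Z$ need not be that close to $\cX$. Let me reorganize: the real claim is that $d_S(\cX, Z)$ is small enough that $\cX$ is the unique codeword of $\C$ within distance roughly $\delta$ of $Z$, and that $\cX$ is in fact the (lifted-back) codeword corresponding to $X'$. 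So first I would set $X'$ to be the submitted codeword of $\C'_{Q,v}$, and let $X \in \C$ be the codeword it came from: either $X \subseteq Q$ (the $\C_Q$ case) or $v \in X$ and $X' = \Delta_\tau(X \cap Q)$ (the $\C_{Q,v}$ case). In both cases $X \cap Q$, after deleting coordinate $\tau$, equals $X'$, and I would note that appending the symbol prescribed by $E(Q)$ is the exact inverse of $\Delta_\tau$ on subspaces of $Q$, so the subspace $X_0 \subseteq Q$ with $\Delta_\tau(X_0) = X'$ is well-defined and $\dim X_0 = \dim X'$.

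Next I would compare $Y$ (obtained from $Y'$ by the same appending operation) with $X_0$. Since appending the $E(Q)$-determined coordinate is an isomorphism from $\F_q^{n-1}$ onto $Q$ preserving dimensions of subspaces and intersections, it preserves the subspace metric; hence $d_S(Y, X_0) = d_S(Y', X') \le \delta - 1$. Now I would analyze the passage to $Z$. In the case $Z = Y$, we immediately have $d_S(Z, X_0) \le \delta-1$; since $X_0 \subseteq Q$ is (in the $\C_Q$ case) equal to $X \in \C$, or (in the $\C_{Q,v}$ case) equals $X \cap Q$ with $v \in X$, I must relate $X_0$ to an actual codeword $\cX$ of $\C$. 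When $\C_Q$ applies, $\cX = X_0 = X$ and we are done. When $\C_{Q,v}$ applies, $X = X_0 \oplus \Span{v}$ (since $v \in X$, $v \notin Q$, $X \cap Q = X_0$), so $d_S(X, X_0) = 1$, giving $d_S(Z, X) \le \delta$; combined with the parity bookkeeping (the case split on $\ell \bmod 2$ versus $p$ and on $\delta$ even/odd is precisely engineered so that $Z$ has dimension of the same parity as codewords of $\C$ when needed, and so that $Z = Y \cup (v+Y)$ versus $Z = Y$ is chosen to make $d_S(Z, \cX)$ come out to at most $\delta - 1$ rather than $\delta$), one concludes $d_S(Z, \cX) \le \delta - 1$ for the appropriate codeword $\cX \in \C$. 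The symmetric analysis handles $Z = Y \cup (v+Y)$: here $d_S(Z, Y) = 1$, and if $v \in X$ then $Z = Y \cup (v+Y)$ is "aimed at" $X$ directly so $d_S(Z, X) \le \delta$, while the parity choice again shaves this to $\delta - 1$.

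Having shown $d_S(Z, \cX) \le \delta - 1 < d/2$ for a genuine codeword $\cX \in \C$, the decoder of $\C$ returns exactly that $\cX$. Finally I would close the loop: the algorithm outputs $\tilde X' = \Delta_\tau(\cX \cap Q)$. In the $\C_Q$ case $\cX = X \subseteq Q$ so $\cX \cap Q = X$ and $\tilde X' = \Delta_\tau(X) = X'$. In the $\C_{Q,v}$ case $\cX = X$ with $v \in X$, so $\cX \cap Q = X \cap Q = X_0$ and $\tilde X' = \Delta_\tau(X_0) = X'$. Either way $\tilde X' = X'$, as claimed.

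The main obstacle I expect is the parity bookkeeping in the case split (Case 1 versus Case 2, and the subcases on $\ell \bmod 2$): one must verify in each of the four combinations that the construction of $Z$ produces a subspace within distance exactly $\delta - 1$ (not $\delta$) of the right codeword $\cX$, using that $\dim X'$ has a controlled relation to $p = \dim X \bmod 2$ (namely $\dim X' = \dim(X\cap Q)$ equals $p$ or $p-1$ depending on whether $v\in X$) and that $|\dim Y' - \dim X'| \le d_S(Y',X') \le \delta-1$. Getting the inequalities to line up — distinguishing when adding the extra dimension $v + Y$ helps versus hurts — is the delicate part; everything else is routine verification that $\Delta_\tau$ and its inverse are metric isomorphisms between subspaces of $Q$ and subspaces of $\F_q^{n-1}$.
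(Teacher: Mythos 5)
Your skeleton is the same as the paper's: lift $Y'$ (and $X'$) back into $Q$, which preserves $d_S$; identify the true codeword $\cX\in\C$ as $X_0$ (case $\C_Q$) or $X_0\oplus\Span{v}$ (case $\C_{Q,v}$), where $X_0\subseteq Q$ is the lift of $X'$; show $d_S(\cX,Z)\le\delta-1$; invoke uniqueness of the codeword within radius $\delta-1$ (since $d=2\delta$); and unwind via $\Delta_\tau(\cX\cap Q)=X'$. You are even more careful than the paper in separating $X_0$ from $\cX$ in the $\C_{Q,v}$ case. However, the one step that the theorem is actually about --- that the parity-based definition of $Z$ forces $d_S(\cX,Z)\le\delta-1$ rather than merely $\le\delta$ --- is only asserted (``precisely engineered so that\ldots'') and explicitly deferred as the expected obstacle. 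That is a genuine gap, and the auxiliary facts you propose to use ($\dim X'$ equal to $p$ or $p-1$, and $\abs{\dim Y'-\dim X'}\le\delta-1$) are not the ones that close it.

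What closes it is short. First, $d_S(X_0\oplus\Span{v},\,Y\oplus\Span{v})=d_S(X_0,Y)$ because $(X_0\oplus\Span{v})\cap(Y\oplus\Span{v})=(X_0\cap Y)\oplus\Span{v}$ (using $v\notin Q$), while a ``misaligned'' choice costs exactly one: $d_S(X_0,\,Y\oplus\Span{v})=d_S(X_0\oplus\Span{v},\,Y)=d_S(X_0,Y)+1$. Hence if $d_S(X_0,Y)\le\delta-2$, either choice of $Z$ gives $d_S(\cX,Z)\le\delta-1$ and the parity rules are irrelevant. The only case in which they matter is $d_S(X_0,Y)=\delta-1$, and there the decisive observation is the congruence $d_S(X_0,Y)\equiv\dim X_0+\ell\pmod 2$, with $\dim X_0\equiv p$ in case $\C_Q$ and $\dim X_0\equiv p+1$ in case $\C_{Q,v}$. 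Substituting $\delta$ even/odd shows that in each of the four combinations the algorithm's test ``$\ell\equiv p$ or not'' selects exactly the aligned $Z$ (namely $Z=Y$ when $\cX\subseteq Q$, and $Z=Y\oplus\Span{v}$, i.e.\ the paper's $Y\cup(v+Y)$, when $v\in\cX$), so no extra unit of distance is incurred at all; contrary to your phrasing, the parity choice does not ``shave'' a $\delta$ down to $\delta-1$, it avoids the $+1$ entirely in the extremal case. This split into $d_S\le\delta-2$ versus $d_S=\delta-1$ plus the parity congruence is precisely the (terse) content of the paper's proof, and without it your write-up does not establish the theorem.
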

\begin{proof}
Assume that $d_S (X',Y') \leq \delta-1$. Let $X \subseteq Q$ be
the word obtained from $X'$ by appending a symbol to the end of
each vector in $X'$ (this can be done by using the generator
matrix $E(Q)$ of $Q$). If $u \in X' \cap Y'$ then we append the same
symbol to $u$ to obtain the element of $X$ and to obtain the element
of $Y$. Hence, $d_S (X,Y)=d_S (X',Y') \leq \delta
-1$. If $d_S (X,Y) \leq \delta -2$ then $d_S (X,Z) \leq d_S (X,Y)
+1 \leq \delta -1$. Now, note that if $\delta-1$ is odd then $Z$
does not have the same parity as the dimensions of the subspaces
in $\C$ and if $\delta-1$ is even then $Z$ has the same parity as
the dimensions of the subspaces in $\C$. Therefore, if $d_S
(X,Y)=\delta-1$ then by the definition of $Z$ we have $Z=Y$ and
hence $d_S (X,Z)=\delta-1$. Therefore, the decoding algorithm of
$\C$ will produce as an output the unique codeword $\cX$ such that
$d_S ( \cX , Z) \leq \delta -1$, i.e., $X = \cX$. $X'$ is obtained
by deleting the last entry is each vector of $X \cap Q$;
$\tilde{X}'$  is obtained by deleting the last entry is each
vector of $\cX \cap Q$. Therefore, $\tilde{X}'=X'$.
\end{proof}

The constant dimension codes constructed in
Section~\ref{sec:eccGrass} have the same dimension for all
codewords. Hence, if $\C$ was constructed by our multilevel
construction then its decoding algorithm can be applied on the
punctured code $\C'_{Q,v}$.

\section{Conclusion and Open Problems}
\label{sec:conclude}

A multilevel coding approach to construct codes in the
Grassmannian and the projective space was presented. The method
makes usage of four tools, an appropriate constant weight code,
the reduced row echelon form of a linear subspace, the Ferrers
diagram related to this echelon form, and rank-metric codes
related to the Ferrers diagram. Some of these tools seem to be
important and interesting for themselves in general and
particularly in the connection of coding in the projective space.
The constructed codes by our method are usually the best known
today for most parameters. We have also defined the puncturing
operation on codes in the projective space. We applied this
operation to obtain punctured codes from our constant dimension
codes. These punctured codes are considerably larger than codes
constructed by any other method. The motivation for considering
these codes came from network coding~\cite{ACLY,HMKKESL,HKMKE} and
error-correction in network coding~\cite{KK,CaYe06,YeCa06}. It
worth to mention that the actual dimensions of the
error-correcting codes needed for network coding are much larger
than the dimensions given in our examples. Clearly, our method
works also on much higher dimensions as needed for the real
application.

The research on coding in the projective space is only in its
first steps and many open problems and directions for further
research are given in our references. We focus now only in
problems which are directly related to our current research.

\begin{itemize}
\item Is there a specification for the best constant weight code
which should be taken for our multilevel approach? Our discussion
on the Hamming code and lexicodes is a first step in this
direction.

\item Is the upper bound of Theorem~\ref{thm:upper_rank} attained
for all parameters? Our constructions for optimal Ferrers diagram
rank-metric codes suggest that the answer is positive.

\item How far are the codes constructed by our method from
optimality? The upper bounds on the sizes of codes in the
Grassmannian and the projective space are still relatively much
larger than the sizes of our codes~\cite{EV,KK,XiFu07}. The
construction of cyclic codes in~\cite{EV} suggests that indeed
there are codes which are relatively much larger than our codes.
But, we believe that in general the known upper bounds on the
sizes of codes in the projective space are usually much larger
than the actual size of the largest codes. Indeed, solution for
this question will imply new construction methods for
error-correcting codes in the projective space.

\end{itemize}

\section*{Acknowledgment}

The authors wish to thank Ronny Roth and Alexander Vardy for many
helpful discussions. They also thank the anonymous reviewers for their valuable comments.


\end{document}